\documentclass[journal]{IEEEtran}
\IEEEoverridecommandlockouts
\usepackage{setspace}
\usepackage{cite}
\usepackage{amsmath,amssymb,amsfonts,mathtools,bm}
\usepackage{amsthm}
\usepackage{algorithm}
\usepackage{algpseudocode}
\usepackage{graphicx}
\usepackage{textcomp}
\usepackage{xcolor,float}
\usepackage{tabularx}
\usepackage{subcaption}
\usepackage{diagbox}
\usepackage{svg}
\usepackage{booktabs}
\usepackage{makecell}
\usepackage{caption}
\usepackage{sidecap}
\usepackage{microtype}
\usepackage{multirow}
\usepackage{adjustbox}
\usepackage{colortbl}
\usepackage{url}
\usepackage{balance}
\usepackage{bbding}
\usepackage{array}
\usepackage{tikz}
\usepackage[final]{changes}
\usepackage{ifthen}
\usepackage{hyperref}

\newtheorem{theorem}{Theorem}

\newtheorem{proposition}{Proposition}

\graphicspath{{figures/}}

\definecolor{bestred}{RGB}{190,0,0}
\definecolor{secondblue}{RGB}{0,70,180}
\definecolor{improvecyan}{RGB}{0,120,140}
\newboolean{showchanges}
\setboolean{showchanges}{false}
\newcommand{\add}[1]{%
    \ifthenelse{\boolean{showchanges}}%
        {\textcolor{blue}{#1}}%
        {#1\relax}%
}

\begin{document}

\title{PALM: Single-Station Super-Resolved Small-Scale Radio-Map Localization by Path-Atom Matching}

\author{Xiucheng Wang and Nan Cheng%
\thanks{The authors are with the State Key Laboratory of ISN and School of Telecommunications Engineering, Xidian University, Xi'an 710071, China (e-mail: xcwang\_1@stu.xidian.edu.cn; dr.nan.cheng@ieee.org).\par This work was supported by the National Key Research and Development Program of China (2020YFB1807700), and the National Natural Science Foundation of China (NSFC) under Grant No.~62201432.}}

\maketitle

\begin{abstract}
Localizing from a single base station is a longstanding goal, since it removes
the synchronized anchors that geometric methods require. A radio map (RM) answers
a position query from this one-station survey, yet classical RMs store coarse
received power and match it by correlation, ignoring the small-scale path
structure a ray tracer provides. We instead build a small-scale RM and show that
cell identification, rather than candidate generation, is its information-limited
bottleneck. We propose path-atom localization by matching (PALM), which
super-resolves a coarse angle-delay observation into scored atoms and matches
them to a ray-traced RM by an exact marginal likelihood. The score marginalizes
atom reality inside the logarithm, and we prove that the common posterior-scaled
surrogate is a Jensen lower bound whose deficit grows with the number of strong
paths. We match on the absolute delay axis under a clock nuisance, since relative
delays jump across shadowing boundaries, and we prove a unit-gradient law, a
capped miss cost, a minimum-mean-square local centroid, and finite-sample
conformal coverage. On the real DeepMIMO campus scenario, PALM localizes to a
1.7 meter median from a single base station, cuts the ninetieth-percentile error
of received-power RM matching by 34 to 62 percent, and halves the single-snapshot
median to 7 meters.
\end{abstract}

\begin{IEEEkeywords}
Radio map, electromagnetic map, small-scale channel information,
super-resolution, ray tracing, channel knowledge map, marginal-likelihood
matching, conformal prediction, 6G systems.
\end{IEEEkeywords}

\section{Introduction}
\label{sec:introduction}

Position information is a first-class service in sixth-generation networks,
where it supports beam management, integrated sensing, and the channel knowledge
maps that environment-aware physical layers query at deployment
time~\cite{wang2026rmtutorial,heath2016overview}. Geometric methods recover a position
from time or angle measurements at several synchronized anchors. Many single-cell
deployments lack that infrastructure. Parametric estimators can still localize
from one base station by resolving the multipath angles and
delays~\cite{shahmansoori2018,wymeersch2017}. A radio map (RM) instead serves a position
from a single base station by matching a measurement against a one-station
survey. The map is built once for a fixed base station and a fixed scene, after
which every query reuses it. Ray tracing now
produces such maps at scale, and it returns the channel as a small set of paths
rather than as a single aggregate power value. The promise of this regime is an
RM that carries the small-scale path-level geometry of the channel rather
than one power value per cell. The open question is how to match a coarse field
measurement against it.

\begin{figure*}[!t]
\centering
\includegraphics[width=\linewidth]{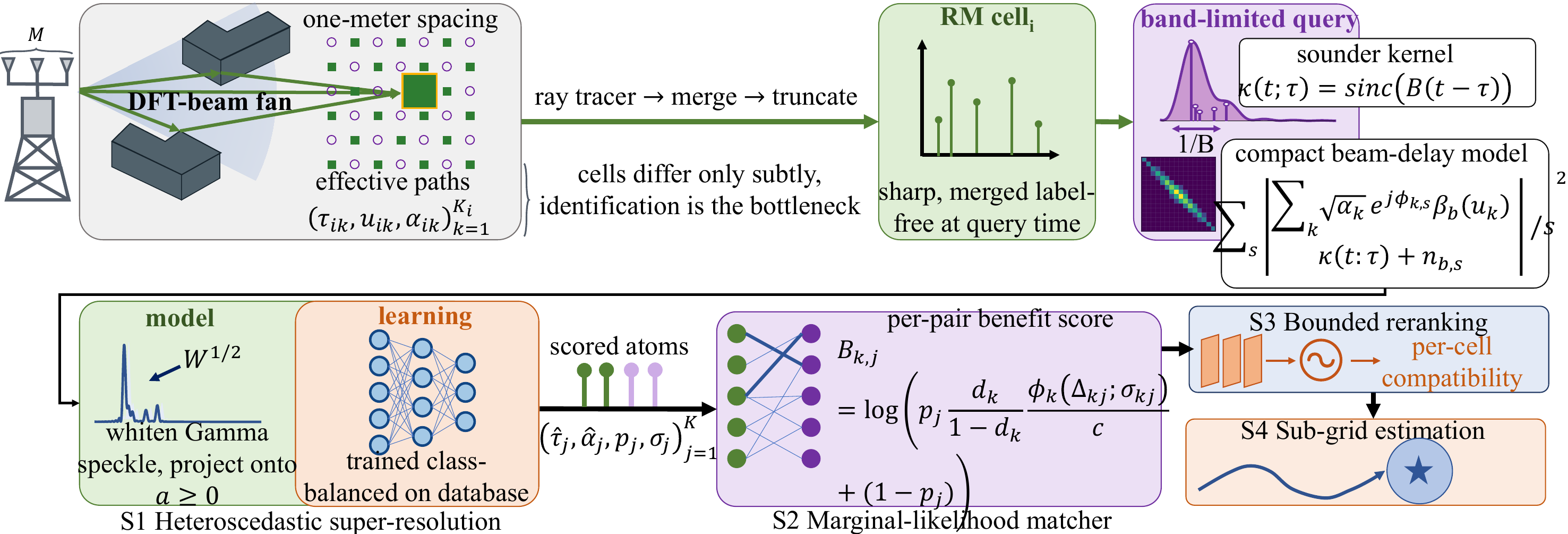}
\caption{Overview of PALM for single-station radio-map localization. (a)~System
model. A single base station surveys the scene once by ray tracing and stores,
for each grid cell, the small-scale path delays, departure angles, and powers of
a radio map. A query observes the same paths through a band-limited aperture, so
the sharp per-cell signature blurs into a noisy power-delay profile and
beam-by-delay map, and cell identification rather than candidate generation
becomes the bottleneck. (b)~PALM pipeline. A heteroscedastic super-resolution
front end and a per-atom verifier recover scored atoms, a marginal-likelihood
matcher scores every database cell on the absolute delay axis under a clock
nuisance, a bounded dense expert reranks the survivors, and a sub-grid centroid
with a Gauss-Newton step returns a position and a split-conformal radius. Each
component is tagged with the theoretical result that justifies it.}
\label{fig:system}
\end{figure*}

A small-scale RM stores, for each surveyed cell, the delays, the
departure angles, and the powers of the effective propagation paths. A query
observes the same paths only through a band-limited aperture. Its power-delay
profile and its angular power spectrum are therefore blurred and noisy
superpositions of the underlying paths. Matching the blurred query to the sharp
map is therefore an inverse problem followed by a retrieval problem. The receiver
must first super-resolve the coarse observation into path atoms, and must then
decide which map cell those atoms came from. The survey model also forbids a
position label at query time, so the matcher must rely on the recovered atoms
alone. A method for this regime must super-resolve the small-scale structure and
must identify the cell robustly under noise.

The central difficulty is cell identification, not candidate generation. A
fine-grid sparse solver generates path atoms densely, so the true paths almost
always lie among the candidates, yet neighbouring cells in a dense survey carry
small-scale signatures that differ only subtly. The signatures are sparse and
similar, a compact early-delay cluster inside a narrow angular band, so the
observable difference between adjacent cells is small. An oracle that retrieves
the true cell among its near neighbours reaches a far higher accuracy than the
deployed matcher, which locates the bottleneck at identification. The ceiling is
set by the measurement rather than by the model, since a richer observation
resolves the ambiguity that a band-limited one cannot. The bottleneck is
therefore information limited, and added model capacity cannot break it.

Existing methods address the wrong half of this problem. Classical RMs
store a coarse received-power value per cell and match it by nearest-neighbour
correlation~\cite{bahl2000radar,he2015wireless}, so they never super-resolve and
they inherit the resolution of the sounder. Subspace and sparse super-resolution
recover path parameters accurately, yet they target the parameters themselves
rather than a map match and assume separations that dense channels
violate~\cite{schmidt1986music,fleury1999sage,candes2014superres}. End-to-end
learned localizers regress a position from a measurement, but a single-scene
survey starves them of data. A higher-capacity network does not improve a match
that is limited by information. None of these methods super-resolves the query
and then scores a small-scale RM match under a principled likelihood. A
matcher that combines super-resolution with a calibrated marginal score has not
been studied for small-scale RM localization.

We propose path-atom localization by matching, which is built on the principle
that identification is the bottleneck and must be scored correctly.
Fig.~\ref{fig:system} overviews the single-station setup and the four-stage
pipeline. PALM
super-resolves the coarse observation into atoms through a heteroscedastic
sparse front end and a learned per-atom verifier. It then scores every map cell
by an exact marginal likelihood that integrates out atom reality inside the
logarithm. It matches on the absolute delay axis under a clock-bias nuisance,
because relative delays jump where the first arrival changes across a shadowing
boundary. A learned dense expert enters the score as a bounded reranking bonus,
a temperature-calibrated local centroid and a Gauss-Newton step place the
estimate below the grid, and a split-conformal radius reports calibrated
uncertainty. The marginal score and the absolute-axis match are the conceptual
core, and each component is justified by a result that fails without it.

The contributions of this paper are theoretical as well as empirical. We
construct the cell score as an exact marginal likelihood and prove that the
common posterior-scaled surrogate is a Jensen lower bound whose deficit grows
with the number of strong paths. We characterize the identifiability of the
delay observables, which justifies matching on the absolute axis, and we bound
the influence of a missed path, the optimality of the local centroid, and the
coverage of the conformal radius. Each theoretical claim is matched to one
component of the pipeline and to a measured confirmation in the experiments. The
main contributions are summarized as follows.
\begin{enumerate}
  \item We localize against a small-scale RM that stores the per-cell delay,
        angle, and power structure, rather than the received-signal-strength
        signature that prior radio-map localization matches. We show that this
        small-scale structure is the substrate a localizer should match, and we
        identify cell identification, rather than candidate generation, as its
        information-limited bottleneck.
  \item We propose PALM, which super-resolves a coarse angle-delay observation
        into scored atoms, matches them by an exact marginal likelihood on the
        absolute delay axis under a clock nuisance, and fuses a bounded dense
        expert before sub-grid refinement.
  \item We prove that the marginal score is exact and that its posterior-scaled
        surrogate is a rank-inverting Jensen bound, and we establish a
        unit-gradient delay law, a bounded-influence capped miss cost, a
        restricted minimum-mean-square centroid, and finite-sample conformal
        coverage.
  \item On the real ray-traced DeepMIMO campus scenario, PALM localizes from a
        single base station to a 1.7 meter median, cuts the ninetieth-percentile
        error of received-power RM matching by 34 to 62 percent across
        signal-to-noise ratios, and halves the single-snapshot median from 14.0
        to 7.0 meters.
\end{enumerate}

\section{Related Work}
\label{sec:related}

Radio-map localization matches a measurement against a survey map of
location-tagged channel signatures. Received-signal-strength maps store a coarse
power vector per cell and localize by nearest-neighbour or weighted
nearest-neighbour search~\cite{bahl2000radar,he2015wireless}, the classical
fingerprinting approach. Probabilistic maps instead replace the search with a
likelihood over a Gaussian signature model~\cite{youssef2005horus}. These methods
are simple and robust, yet their signature is a blurred aggregate that discards
the small-scale path structure of the channel and inherits the resolution of the
sounder. They also snap a query to the nearest surveyed grid point, so their
accuracy is bounded by the grid spacing at high signal-to-noise ratio. PALM keeps
the survey-then-match structure but replaces the aggregate signature with
super-resolved path atoms and a principled likelihood, which recovers structure
the aggregate hides.

Parametric super-resolution estimates path delays and angles from the channel.
Subspace estimators resolve well-separated paths from the covariance
eigenstructure, and expectation-maximization estimators extend them to joint
delay and angle~\cite{schmidt1986music,roy1989esprit,fleury1999sage}. Sparse and
off-grid methods cast recovery as support estimation over a delay dictionary and
recover the support under a minimum-separation
condition~\cite{tropp2007omp,bajwa2010ccs,candes2014superres,tang2013offgrid}. These methods
target the path parameters as the end product and assume separations that dense
channels routinely violate. PALM uses a sparse front end only to propose atoms,
and it shifts the objective from parameter recovery to RM matching, where
the residual ambiguity between near cells is decided by a learned score.

Learning-based localization regresses a position from a measurement or learns a
representation of the radio environment. Deep localization networks map a
channel image to coordinates, and contrastive encoders learn environment
features without labels~\cite{oord2018cpc,chen2020simclr}. Channel charting
embeds channels into a relative map of the radio environment without position
labels~\cite{studer2018charting}. The channel knowledge
map literature constructs a dense radio map across a scene from sparse samples
or scene geometry~\cite{wang2026rmtutorial,levie2021radiounet}. Recent
generative diffusion models synthesize physically consistent maps, including
multipath-aware and inverse
variants~\cite{wang2025radiodiff,wang2026radiodiffk2,wang2026radiodiffinv}. These
methods construct the map as the end product, whereas PALM consumes a ray-traced
map and matches its small-scale structure to localize. They also either demand
more data than a single survey offers or model aggregates rather than the
per-atom match a position query needs. PALM uses
learning where it helps a decision under fixed information, namely a per-atom
reality verifier and a dense reranking expert, and it keeps the explicit
generative match that gives the score its inductive bias. The experiments show
that a higher-capacity learned scorer does not beat this lightweight pairing,
which is consistent with an information-limited bottleneck.

A separate line studies the verification of generated candidates and the
calibration of learned scores. Detection and retrieval systems propose
candidates and rerank them~\cite{carion2020detr}, and a verifier trained against
the task metric turns selection into a supervised problem~\cite{ye2012fmeasure}. Calibration matters because a selection
acts on a posterior, and a miscalibrated posterior shifts the operating
point~\cite{guo2017calibration,platt1999probabilistic}. Conformal prediction
turns a point estimate into a set with finite-sample
coverage~\cite{vovk2005algorithmic}. PALM imports the generate-then-verify
pattern into the front end and reports a split-conformal radius, and it makes the
match itself a calibrated likelihood rather than a tuned threshold. The
combination of marginal-likelihood matching, absolute-axis identifiability, and
conformal uncertainty has not been studied for small-scale RM
localization.

\section{System Model and Problem Formulation}
\label{sec:model}

We consider a single base station that sounds the channel to a grid of receiver
positions in a fixed scene. For receiver $i$, the ray tracer returns a set of
paths, each with a delay, a departure azimuth, and a power. The effective path
set is obtained by a dynamic-range cut, a first-arrival alignment, power-weighted
merging within the resolution cell, and truncation to the strongest components.
The merged set is the RM entry of cell $i$, denoted
$\{(\tau_{ik}, u_{ik}, a_{ik})\}_{k=1}^{K_i}$. Here $\tau_{ik}$ is a relative
delay in nanoseconds, $u_{ik}\in[-1,1]$ is the sine of the departure azimuth,
and $a_{ik}$ is a linear power normalized to the strongest path. The database
also stores the position $\bm{x}_i\in\mathbb{R}^2$ and the absolute first-arrival
time $t_{0,i}$ of each cell. The number of effective paths $K_i$ is small and
varies across cells, and the survey is taken once for the fixed scene.

The sounder observes the paths through a band-limited aperture. The base station
transmits over a bandwidth $B$ and forms a power-delay profile on a uniform delay
grid. Each path contributes a sinc kernel centred at its delay,
\begin{align}
  \kappa(t;\tau) = \operatorname{sinc}\!\big(B\,(t-\tau)\big),
  \label{eq:kernel}
\end{align}
whose first null sits at $1/B$, the Rayleigh resolution of the sounder. Over $S$
snapshots, each path draws an independent uniform phase, the receiver sums the
complex amplitudes, adds circular complex Gaussian noise, and averages the
squared magnitude. The observed profile is
\begin{align}
  y(t) = \frac{1}{S}\sum_{s=1}^{S}
         \Big| \sum_{k} \sqrt{a_k}\, e^{j\phi_{k,s}}\,\kappa(t;\tau_k)
         + n_s(t) \Big|^2 ,
  \label{eq:obs}
\end{align}
with $\phi_{k,s}$ the per-snapshot phase and $n_s$ the noise. The forward model
radiates the raw unmerged rays rather than the merged labels, so the observation
does not commit the inverse crime against the RM. The angle-delay
observation places a half-wavelength uniform linear array (ULA) of $M$ elements
at the base station and beamforms each snapshot with a unitary discrete Fourier
transform (DFT). A path with beamspace coordinate $u_k$ illuminates beam $b$
through the array response $\beta_b(u_k)$, and the averaged beam-by-delay power
map is
\begin{align}
  Y(b,t) = \frac{1}{S}\sum_{s=1}^{S}
           \Big| \sum_{k} \sqrt{a_k}\, e^{j\phi_{k,s}}\,
           \beta_b(u_k)\,\kappa(t;\tau_k) + n_{b,s}(t) \Big|^2 .
  \label{eq:obs2d}
\end{align}
The signal-to-noise ratio (SNR) is defined against the clean incoherent profile
peak. The marginals of $Y$ are an angular power spectrum and a power-delay
profile, and their joint map retains the coupling that the marginals discard.

The front end converts the coarse observation into discrete path atoms. It emits
delay atoms $\{(\hat\tau_j,\hat a_j,p_j,\sigma_j)\}_{j=1}^{J}$ and angle atoms of
the same form, where $p_j$ is a learned reality posterior that atom $j$ matches a
true path and $\sigma_j$ is a calibrated location scale. We write $d_k\in(0,1)$
for the detection probability assigned to path $k$ from its amplitude, $c$ for
the clutter intensity of spurious atoms, and $\varphi_\nu(\cdot;\sigma)$ for the
Student-t density with $\nu$ degrees of freedom and scale $\sigma$. The matcher
operates in one of two synchronization modes. In the unsynchronized mode the
receiver observes only relative delays referenced to its first arrival. In the
synchronized mode the receiver reports a trigger time
$t_{\mathrm{m}} = t_0^\star + b$ with clock bias $b$ of known prior scale
$\sigma_{\mathrm{c}}$, which exposes the absolute delay axis $t_{0,i}+\tau_{ik}$.

The localization task is to estimate the receiver position from the observation
alone. The estimator returns a point $\hat{\bm{x}}$ and a radius $r$ such that
the position error $\lVert\hat{\bm{x}}-\bm{x}^\star\rVert$ is small and the ball
of radius $r$ covers the truth with a prescribed probability. The survey
provides an RM of cells whose small-scale signatures and positions are known, and the
deployment provides no position label. The problem is to identify the cell that
generated the observation and to place the estimate below the survey grid,
using the recovered atoms and, in the synchronized mode, a coarse trigger time.
Table~\ref{tab:notation} summarizes the notation.

\begin{table}[!t]
\centering
\caption{Summary of the main notation.}
\label{tab:notation}
\resizebox{0.9\linewidth}{!}{
\begin{tabular}{@{}c|l@{}}
\hline
Symbol & Meaning \\
\hline
$\bm{x}_i,\ t_{0,i}$ & position and absolute first arrival of cell $i$ \\
$\tau_{ik}, u_{ik}, a_{ik}$ & delay, beamspace angle, power of database path $k$ \\
$y(t),\, Y(b,t)$ & delay profile and beam-by-delay power map \\
$\kappa(t;\tau)$ & band-limited sinc kernel of the sounder \\
$B,\, S,\, M$ & bandwidth, snapshots, array elements \\
$\hat\tau_j, p_j, \sigma_j$ & atom $j$, its reality posterior and location scale \\
$d_k,\ c,\ \varphi_\nu$ & detection probability, clutter intensity, Student-t \\
$b,\ \sigma_{\mathrm c}$ & clock bias and its prior scale (synchronized mode) \\
$\kappa_{\mathrm m},\ T$ & miss-cost cap and centroid temperature \\
\hline
\end{tabular}}
\end{table}

\section{The PALM Framework}
\label{sec:method}

\subsection{Overview}

PALM separates localization into a super-resolution front end, a
marginal-likelihood matcher, and a sub-grid estimator. The front end recovers
scored atoms from the coarse observation, so the true path structure is exposed
to the matcher. The matcher scores every database cell by an exact marginal
likelihood and selects the cell that best explains the atoms. The estimator
converts the discrete decision into a continuous coordinate and a calibrated
radius. The design follows one principle. Super-resolution and candidate
generation are convex and trustworthy and stay model-based, while identification
under fixed information is a discrimination problem and admits a learned aid.
Fig.~\ref{fig:system} overviews the single-station setup and the four pipeline
stages, Algorithm~\ref{alg:palm} states the full pipeline for a single query, and
Fig.~\ref{fig:heatmap} previews how each information level and each estimator
contribute across operating conditions.

\begin{algorithm}[!t]
\caption{PALM localization for one query at deployment.}
\label{alg:palm}
\begin{algorithmic}[1]
\Require Observation $(y,Y)$, database $\{(\bm{x}_i,t_{0,i},\{\tau_{ik},u_{ik},a_{ik}\})\}$, verifiers, dense expert, mode
\Ensure Position estimate $\hat{\bm{x}}$ and conformal radius $r$
\State Super-resolve $(y,Y)$ into atoms $\{(\hat\tau_j,\hat a_j,p_j,\sigma_j)\}$ by heteroscedastic NNLS
\State Score atom reality $p_j$ with the learned verifier
\For{each viable cell $i$}
    \State Solve the assignment $\pi_i$ maximizing the marginal benefit \eqref{eq:score}
    \State In synchronized mode, scan clock hypotheses and gate on $t_{0,i}$
    \State $s_i \gets$ capped marginal score of cell $i$
\EndFor
\State Add the bounded dense bonus to the top-$K$ physics cells: $s_i \mathrel{+}= w\,z_i$
\State Form the local posterior centroid $\bm{x}_{\mathrm c}$ over the ball around $\arg\max_i s_i$
\State $\hat{\bm{x}} \gets$ one Gauss-Newton step on the absolute-axis track gradients from $\bm{x}_{\mathrm c}$
\State $r \gets$ split-conformal radius at level $1-\alpha$
\State \Return $\hat{\bm{x}}, r$
\end{algorithmic}
\end{algorithm}

\subsection{Heteroscedastic Super-Resolution Front End}

The front end recovers delay atoms from the profile and angle atoms from the
spectrum by non-negative least squares (NNLS) against an oversampled dictionary
of squared kernels. Super-resolving a band-limited profile into atoms is a
deconvolution, classically addressed by iterative restoration~\cite{richardson1972}. Averaged power carries multiplicative speckle, since $S$
independent looks make the power at a sample with mean $\mu$ Gamma distributed
with variance $\mu^2/S$. The solve therefore whitens each sample by its modeled
variance and projects onto the non-negative orthant,
\begin{align}
  \hat{\bm{a}} = \arg\min_{\bm{a}\ge 0}
    \big\lVert \bm{W}^{1/2}\big(\bm{\Phi}\bm{a} - y\big)\big\rVert_2^2 ,
  \label{eq:nnls}
\end{align}
where $\bm{\Phi}$ collects the squared-kernel responses with a noise-floor column
and $\bm{W}=\operatorname{diag}(v_i^{-1})$ holds the inverse modeled variances.
Proposition~\ref{prop:gls} shows that the whitened solve is the minimum-variance
unbiased linear estimate on the active support, which prevents strong lobes from
dominating the residual and protects weak true paths. The solver carries a
robust fallback that raises the iteration cap and projects a least-squares
solution when the dictionary is ill conditioned, so a noisy low-SNR profile never
stalls the pipeline. Each recovered atom carries a noise-propagated scale
$\sigma_j$ that the matcher consumes downstream.

A recovered atom may be a true path or a noise artifact, so a learned verifier
attaches a reality posterior to each one. The verifier is a small multilayer
perceptron (MLP) that reads per-atom evidence such as relative power, isolation,
peak shape, and the recovered scale, conditioned on the operating SNR and
snapshot count. It is trained on database cells only, so a query position never
leaks into training. The verifier is trained class balanced, because true atoms
are a small minority of the candidates. An unbalanced classifier otherwise
collapses its posterior onto the prior and erases the true-versus-junk separation
that the match needs. The verifier output $p_j$ enters the matching score as the reality
probability of atom $j$. The front end is thus a convex recovery followed by a
calibrated discrimination, and neither part can hallucinate a path that a
generative model might.

\subsection{Marginal-Likelihood Matching}

Given the atoms and a candidate cell, the matcher scores the cell under a
generative model with four independent mechanisms. Each path is detected
independently with probability $d_k$, each detected path emits one atom with
Student-t location error, clutter atoms arrive as a Poisson process with
intensity $c$, and the front end labels each atom with a reality posterior $p_j$
that is conditionally independent across atoms. Marginalizing the atom reality
inside the logarithm gives the cell score of Proposition~\ref{prop:marginal}.
The benefit of a matched pair $(k,j)$ is
\begin{align}
  B_{kj} = \log\!\Big[\, p_j\,\frac{d_k}{1-d_k}\,
           \frac{\varphi_\nu(\Delta_{kj};\sigma_{kj})}{c} + (1-p_j) \Big],
  \label{eq:score}
\end{align}
with $\Delta_{kj}$ the matched residual and $\sigma_{kj}^2=\sigma_j^2+\sigma_{\mathrm{db}}^2$,
and the cell score is the maximum-weight bipartite assignment over the pair
benefits~\cite{kuhn1955hungarian} plus a base term for undetected paths. The bracket also multiplies an
amplitude factor, the ratio of a capped Gaussian density on the relative-decibel
residual to a uniform clutter density, which is valid because path powers are
deterministic in a fixed scene. The intuitive alternative scales the match
log-ratio by the posterior, and Proposition~\ref{prop:jensen} shows that this
surrogate is a Jensen lower bound whose deficit grows with the number of strong
paths and inverts cell rankings.

A truthful cell may carry a database path that the front end failed to recover,
and such a miss must not reject the cell. The matcher therefore caps the
per-path missed-detection cost at $\kappa_{\mathrm m}$ in both the base term and
the matched repayment. Proposition~\ref{prop:cap} shows that the cap equals exact
inference under a clipped detection probability and bounds the influence of any
single miss by $\kappa_{\mathrm m}$. The cap turns a brittle product of
likelihoods into a robust score, so a missed weak path costs a fixed amount
rather than an unbounded one. The cap also keeps the score of a many-path
truthful cell comparable to that of a sparse cell, which prevents the
missed-detection debt that the Jensen surrogate leaves unrepaid. The matcher
scores every database cell rather than a retrieved shortlist, because the fixed
scene is small enough to admit an exhaustive scan.

\subsection{Absolute-Axis Matching and the Clock Nuisance}

The matcher chooses its delay axis from the synchronization mode, and the choice
is forced by identifiability. Relative delays referenced to the first arrival
jump where the anchoring path leaves the visible set across a shadowing boundary,
as Proposition~\ref{prop:relative} states, so a relative signature can
translate entirely between adjacent cells. The synchronized mode therefore
matches on the absolute axis $t_{0,i}+\tau_{ik}$, where every path delay obeys
the unit-gradient law of Proposition~\ref{prop:lipschitz} and is smooth in
position. The unknown clock bias is the only nuisance, and the matcher scores
each cell over a small set of clock hypotheses generated by aligning the
strongest verified atom to the strongest database paths, followed by a weighted
least squares polish. Proposition~\ref{prop:clock} shows that this hypothesis set
contains the true bias when the anchoring atom is real, and that the polished
estimate converges at the usual root-$n$ rate. The unsynchronized mode keeps the
relative axis and is reported as a no-synchronization baseline.

\subsection{Dense Expert and Bounded Reranking}

The sparse front end consumes the marginals of the observation and therefore
discards the joint beam-by-delay structure. A scene-specific dense expert
recovers part of this lost structure from the joint map. The expert is a small
two-dimensional convolutional neural network (CNN) that reads the joint power map
and emits a compatibility score for every database cell, trained on database
cells only with SNR and snapshot augmentation. Its score enters the matcher as a
bounded reranking bonus rather than an additive term over all cells. The bonus is
z-scored and added only inside the top-$K$ cells of the physics score, so a
confident-but-wrong dense pick at low SNR cannot override the physics landscape.
The fusion weight is calibrated per operating condition on held-out survey cells
and set to zero where the dense expert is unreliable, so the expert never harms
the regime it cannot help. The dense expert thus breaks ties among
physics-plausible cells rather than competing with the physics.

\subsection{Sub-Grid Estimation and Conformal Uncertainty}

The discrete cell decision is converted to a continuous coordinate by a
temperature-calibrated local centroid. The cell scores are joint log-likelihoods,
so a temperature-scaled softmax over the cells in a small ball around the argmax
is the posterior of the dominant mode, and its mean is the restricted
minimum-mean-square estimate of Proposition~\ref{prop:centroid}. The ball
restriction prevents averaging across separated modes, so the centroid removes
grid-ring jitter without collapsing distinct ambiguities. A single Gauss-Newton
step on the absolute-axis track gradients then refines the estimate, and one
relinearization suffices because the absolute delays are Lipschitz in position.
The reported radius uses split-conformal calibration on plain errors, since the
refinement covariance does not track misidentification failures.
Proposition~\ref{prop:conformal} guarantees its finite-sample coverage. The
estimator is therefore a sequence of provable steps that turn a cell index into a
calibrated position.

\section{Theoretical Foundations}
\label{sec:theory}

We establish the results that justify the score and the estimator. The first two
construct the matching score and explain why its common surrogate fails. The next
results characterize the identifiability of the delay observables and the
robustness of the score, and the last results justify the sub-grid estimator and
its uncertainty. A final information-theoretic bound then quantifies the two-point
resolution floor that no detector can escape. These bounds situate PALM against
the Fisher-information limits of single-anchor
positioning~\cite{shen2010limits,kakkavas2019limits}. Proofs are given inline, and the
measured confirmations appear in Section~\ref{sec:experiments}.

\begin{proposition}[Exact marginal matching score]\label{prop:marginal}
Let $\pi$ be a partial matching between paths and atoms. Under the generative
model of Section~\ref{sec:method}, the log-likelihood of the atom set given cell
$i$ and matching $\pi$, normalized by the empty matching and marginalized over
the atom reality indicators, equals
\begin{align}
\ell_i(\pi) &= \sum_{k=1}^{K_i} \log\bigl(1-d_k\bigr) \nonumber\\
&\ + \sum_{(k,j)\in\pi} \log\!\Bigl[\, p_j \,\frac{d_k}{1-d_k}\,\frac{\varphi_\nu(\Delta_{kj};\sigma_{kj})}{c} + \bigl(1-p_j\bigr) \Bigr],
\label{eq:score-full}
\end{align}
where $\sigma_{kj}^2 = \sigma_j^2 + \sigma_{\mathrm{db}}^2$. The maximum over
matchings is a maximum-weight bipartite assignment on the pair benefits.
\end{proposition}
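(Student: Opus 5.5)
The plan is to write the joint likelihood of the atom set under the four-mechanism generative model for a fixed matching $\pi$, marginalize the reality indicators atom by atom, and then normalize by the empty matching so that every factor not involving a matched pair cancels.

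\emph{Setup.} Fix cell $i$ and a partial matching $\pi$, viewed as an injective map from a subset of paths to atoms. Detection events are independent across paths, so the probability of the detection pattern induced by $\pi$ factors as
\begin{align*}
\prod_{k\in\pi} d_k \prod_{k\notin\pi}(1-d_k) = \prod_{k=1}^{K_i}(1-d_k)\prod_{k\in\pi}\frac{d_k}{1-d_k}.
\end{align*}
This gives the base term and the odds factor in each pair benefit. A matched atom has location density $\varphi_\nu(\Delta_{kj};\sigma_{kj})$, where $\sigma_{kj}^2=\sigma_j^2+\sigma_{\mathrm{db}}^2$ follows by convolving the front-end error with the database uncertainty; for this I would invoke the Gaussian-scale-mixture approximation of the Student-t. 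Unmatched atoms are clutter from a Poisson process of intensity $c$. The empty matching assigns every atom to clutter and contributes $e^{-c|\mathcal{D}|}c^{J}$ over the observation domain $\mathcal{D}$. Dividing by it, each matched atom converts one factor $c$ into $\varphi_\nu/c$, and the Poisson void factor cancels.

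\emph{Marginalizing reality.} I would treat $p_j$ as the verifier's posterior that atom $j$ is real, with conditional independence across atoms. For a matched pair, the atom is either real, with weight $p_j$ and likelihood ratio $\frac{d_k}{1-d_k}\varphi_\nu/c$ relative to the null, or spurious, with weight $1-p_j$. In the spurious case the path reverts to undetected and the atom to clutter, so the ratio relative to the empty matching is exactly $1$. Summing the two hypotheses inside the pair's factor gives the bracket in \eqref{eq:score-full}. Conditional independence lets the marginal factor over pairs, so the sum over reality indicators of the product equals the product of per-pair sums. Unmatched atoms contribute the same mixture in both numerator and denominator, so they cancel. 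Taking logs yields $\ell_i(\pi)$.

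\emph{Assignment.} The score equals a $\pi$-independent constant $\sum_k\log(1-d_k)$ plus $\sum_{(k,j)\in\pi}B_{kj}$. Maximizing over partial matchings is therefore a maximum-weight bipartite matching, solvable by the Hungarian method after padding with zero-weight dummy nodes so that pairs with negative benefit can stay unmatched.

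\emph{Main obstacle.} The delicate step is justifying that the ``unreal'' branch contributes exactly the null ratio $1$, which is what makes the $(1-p_j)$ term additive inside the logarithm. This requires treating $p_j$ as a prior-odds-calibrated reality weight rather than a posterior that already conditions on the atom's location; otherwise the evidence would be double counted. I would handle this by stating it as a modeling assumption of the section (verifier output independent of $\Delta_{kj}$ given reality) and checking that the spurious branch's joint probability coincides with the empty-matching term for that pair.
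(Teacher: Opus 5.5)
Your proposal is correct and takes essentially the same route as the paper: condition on each matched atom's reality indicator, let the real branch contribute $d_k\varphi_\nu(\Delta_{kj};\sigma_{kj})$ and the spurious branch $(1-d_k)c$, average with weight $p_j$ inside the logarithm, factor over pairs by conditional independence, normalize away the clutter, and observe that the additive pair benefits make the maximization a linear assignment. Your bookkeeping is slightly cleaner, because normalizing only by the cell-independent clutter factor $e^{-c|\mathcal{D}|}c^{J}$ is what actually leaves the base term $\sum_k\log(1-d_k)$ in place (the paper's proof divides by $\prod_k(1-d_k)\prod_j c$, which read literally would cancel it); your approximate convolution argument for $\sigma_{kj}$ is unnecessary, since that scale is part of the model's definition and not something the proof must derive.
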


\begin{proof}
Condition on the reality vector $\bm{r}\in\{0,1\}^J$. A matched pair $(k,j)$ with
$r_j=1$ contributes $d_k\varphi_\nu(\Delta_{kj};\sigma_{kj})$, since the path is
detected and emits atom $j$, and the same pair with $r_j=0$ contributes
$(1-d_k)\,c$, since the path is then undetected and the atom is clutter. An
unmatched path contributes $1-d_k$ and an unmatched atom contributes $c$
regardless of $r_j$. Taking the expectation over $r_j\sim\mathrm{Bernoulli}(p_j)$
factorizes across pairs by conditional independence and yields
$p_j d_k\varphi+(1-p_j)(1-d_k)c$ per matched pair. Dividing by the empty-matching
likelihood $\prod_k(1-d_k)\prod_j c$ cancels the cell-independent clutter terms
and produces \eqref{eq:score-full}. The sum is additive over disjoint pairs, so
its maximum over one-to-one matchings is a linear assignment problem.
\end{proof}

\begin{proposition}[Jensen gap of the posterior-scaled score]\label{prop:jensen}
Let $A_{kj} = \tfrac{d_k}{1-d_k}\tfrac{\varphi_\nu(\Delta_{kj};\sigma_{kj})}{c}$
and let $\tilde{B}_{kj} = p_j \log A_{kj}$ be the posterior-scaled benefit. Then
the exact benefit $B_{kj} = \log\bigl(p_j A_{kj} + 1 - p_j\bigr)$ satisfies
$B_{kj}\ge\tilde{B}_{kj}$ with equality only at $A_{kj}=1$, and as
$A_{kj}\to\infty$,
\begin{align}
\tilde{B}_{kj} - B_{kj} = -\,(1-p_j)\log A_{kj} - \log p_j + o(1),
\label{eq:gap}
\end{align}
so the per-pair deficit grows as $(1-p_j)\log A_{kj}$. For a cell whose $K$
strong paths are correctly matched with posteriors bounded away from one, the
scaled score understates the exact score by an amount that grows linearly in $K$.
\end{proposition}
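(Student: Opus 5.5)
The inequality $B_{kj}\ge\tilde B_{kj}$ is an instance of Jensen's inequality for the concave logarithm. Write the exact benefit as the logarithm of an expectation over the reality indicator $r_j\sim\mathrm{Bernoulli}(p_j)$: $p_jA_{kj}+(1-p_j)\cdot 1=\mathbb{E}[A_{kj}^{r_j}]$. Concavity of $\log$ then gives $\log\mathbb{E}[A_{kj}^{r_j}]\ge\mathbb{E}[\log A_{kj}^{r_j}]=p_j\log A_{kj}=\tilde B_{kj}$. This is the same marginalization that produced \eqref{eq:score-full} in Proposition~\ref{prop:marginal}. The surrogate simply exchanges the expectation with the logarithm.

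For the equality case, I would use strict concavity of $\log$. The Jensen gap vanishes iff the random variable $A_{kj}^{r_j}$ is almost surely constant. For $p_j\in(0,1)$ this happens exactly when $A_{kj}=1$. I will state explicitly that $p_j\in(0,1)$ is assumed. This is natural because $p_j$ is a calibrated posterior, and at $p_j\in\{0,1\}$ the two scores coincide trivially. An elementary alternative is to set $g(A)=\log(p_jA+1-p_j)-p_j\log A$. Then $g(1)=0$, and $g'(A)=p_j/(p_jA+1-p_j)-p_j/A$ has the sign of $(1-p_j)(A-1)$, so $A=1$ is the unique minimizer.

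For the asymptotic \eqref{eq:gap}, I would factor out $p_jA_{kj}$:
\begin{align*}
B_{kj}=\log p_j+\log A_{kj}+\log\!\bigl(1+\tfrac{1-p_j}{p_jA_{kj}}\bigr).
\end{align*}
The last term is $o(1)$ as $A_{kj}\to\infty$ with $p_j$ fixed. Subtracting from $\tilde B_{kj}=p_j\log A_{kj}$ gives $-(1-p_j)\log A_{kj}-\log p_j+o(1)$. The dominant deficit is therefore $(1-p_j)\log A_{kj}$, which diverges.

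For the linear-in-$K$ claim, I would sum over the $K$ matched pairs of the cell. The base term $\sum_k\log(1-d_k)$ is common to both scores and cancels. The total deficit is then $\sum_{k}[(1-p_{j_k})\log A_{kj_k}+\log p_{j_k}]+o(1)$. The main obstacle is making ``strong'' and ``bounded away from one'' quantitative. I would assume $p_{j_k}\le 1-\epsilon$ and $\log A_{kj_k}\ge L$ with $L$ large. One also needs $p_{j_k}\ge\underline p>0$ so that $\log p$ is bounded. Under these assumptions the deficit is at least $K(\epsilon L+\log\underline p-o(1))$, which is linear in $K$ once $L>|\log\underline p|/\epsilon$. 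Rather than relying on the $o(1)$ term, I would prefer a nonasymptotic per-pair lower bound, $B-\tilde B\ge(1-p)\log A+\log p$, which holds exactly since $\log(1+x)\ge 0$. This makes the summation clean, and the rank inversion mentioned in the text then follows by comparing a many-path truthful cell with a sparse cell.
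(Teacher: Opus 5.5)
Your proof is correct and follows the paper's argument: you apply Jensen's inequality to the Bernoulli mixture $\mathbb{E}[A_{kj}^{r_j}]=p_jA_{kj}+1-p_j$, factor out $p_jA_{kj}$ for the asymptotic expansion, and sum the per-pair deficits over the $K$ matched strong pairs. Your additions are sound refinements the paper omits, namely the restriction $p_j\in(0,1)$ that the equality case needs and the exact bound $B_{kj}-\tilde B_{kj}\ge(1-p_j)\log A_{kj}+\log p_j$, which makes the linear-in-$K$ claim nonasymptotic once $p_j$ is also bounded away from zero.
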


\begin{proof}
Jensen's inequality on the concave logarithm gives
$\log\mathbb{E}_r[L]\ge\mathbb{E}_r[\log L]$ for the per-pair likelihood $L=A_{kj}$
when $r_j=1$ and $L=1$ when $r_j=0$, which is
$B_{kj}\ge p_j\log A_{kj}=\tilde{B}_{kj}$. For $A_{kj}\gg 1$,
$B_{kj}=\log A_{kj}+\log p_j+\log\bigl(1+\tfrac{1-p_j}{p_jA_{kj}}\bigr)
\approx\log A_{kj}+\log p_j$, so the deficit is
$-(1-p_j)\log A_{kj}-\log p_j$, dominated by $(1-p_j)\log A_{kj}$. Summing over
$K$ matched strong pairs gives the linear growth.
\end{proof}

The deficit of the surrogate acts only on cells whose paths are matched. A
many-path true cell therefore pays an unrepaid missed-detection debt, while a
sparse impostor with few matched pairs pays almost nothing. The debt scales with
the number of strong matched paths, so the true cell sinks below impostors as the
geometry grows richer. This is the mechanism behind the rank collapse that the
experiments report, and the capped cost of Proposition~\ref{prop:cap} later
repairs the matched cell. We next characterize the delay observables that the
score matches.

\begin{proposition}[Unit-gradient law for absolute delays]\label{prop:lipschitz}
Within the visibility region of a specular path whose last interaction point is
$\bm{s}$ and whose upstream geometric length is $L$, the absolute delay
$\tau(\bm{x}) = (L + \lVert \bm{x}-\bm{s}\rVert)/c_0$ is differentiable with
$\lVert \nabla \tau \rVert = 1/c_0 \approx 3.336$ nanoseconds per meter, hence
$1/c_0$-Lipschitz.
\end{proposition}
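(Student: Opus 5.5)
The plan is to hold the upstream geometry fixed and differentiate the Euclidean norm directly. Inside the visibility region of the path, the chain of interactions up to the last interaction point $\bm{s}$ does not depend on $\bm{x}$. For a specular path, the reflection or diffraction points upstream of $\bm{s}$ are fixed by the transmitter and the scene. Only the final leg, from $\bm{s}$ to the receiver, moves. So $L$ and $\bm{s}$ are constants on the region, and $\tau(\bm{x}) = (L + \lVert \bm{x}-\bm{s}\rVert)/c_0$ is an affine function of the distance $\lVert\bm{x}-\bm{s}\rVert$.

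First, I will note that $\bm{x}\neq\bm{s}$ throughout the region, because a receiver cannot sit on the interaction point. The norm is then differentiable there, with
\begin{align}
\nabla \tau(\bm{x}) = \frac{1}{c_0}\,\frac{\bm{x}-\bm{s}}{\lVert \bm{x}-\bm{s}\rVert},
\end{align}
a unit vector scaled by $1/c_0$. This gives $\lVert\nabla\tau\rVert = 1/c_0$ exactly. Numerically, $1/c_0 \approx 1/(2.998\times 10^{8})$ seconds per meter, which is $3.336$ nanoseconds per meter.

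For the Lipschitz claim, I will not integrate the gradient along a segment, since the visibility region need not be convex. I will use the reverse triangle inequality instead:
\begin{align}
\bigl|\lVert\bm{x}-\bm{s}\rVert - \lVert\bm{x}'-\bm{s}\rVert\bigr| \le \lVert \bm{x}-\bm{x}'\rVert .
\end{align}
This yields $|\tau(\bm{x})-\tau(\bm{x}')|\le \lVert\bm{x}-\bm{x}'\rVert/c_0$ for any two points in the region, with no convexity needed. It also gives Lipschitz continuity of the same formula on all of $\mathbb{R}^2$.

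The main obstacle is the modeling step, not the calculus. I have to justify that $L$ and $\bm{s}$ really are locally constant. If the last interaction is a specular reflection, $\bm{s}$ moves with $\bm{x}$. In that case I will use the image method: unfold the final reflection by reflecting the preceding point across the surface. The delay then becomes $\lVert \bm{x}-\bm{s}'\rVert/c_0$ plus a constant, where $\bm{s}'$ is a fixed image point. The same argument applies with $\bm{s}$ replaced by the image, so I will read the statement's $\bm{s}$ as the image or diffraction point whenever needed. Boundaries of the visibility region are excluded, because that is where the path appears or vanishes. This ties to the jump behavior the paper attributes to relative delays in Proposition~\ref{prop:relative}.
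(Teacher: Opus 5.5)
Your proof is correct, and its core step matches the paper's: both hold $\bm{s}$ and $L$ fixed and differentiate the norm to get $\nabla\tau=(\bm{x}-\bm{s})/(c_0\lVert\bm{x}-\bm{s}\rVert)$. You depart from the paper in two places.

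For the Lipschitz claim, the paper applies the mean value inequality along the segment joining two points of a \emph{convex} visibility neighbourhood. That quietly adds a convexity assumption the statement does not make, and it yields only a local bound. Your reverse triangle inequality needs neither convexity nor the gradient. It gives the $1/c_0$ bound on the whole visibility region, and indeed on all of $\mathbb{R}^2$ for the formula. In exchange, the paper's route uses only the gradient bound and not the closed form. It would therefore transfer to delays whose interaction points slide with $\bm{x}$, for example three-dimensional edge diffraction, where Fermat's principle still makes the gradient a unit vector divided by $c_0$. Your route depends on the delay being exactly a constant plus a distance to a fixed point.

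Second, the paper simply asserts that $L$ does not depend on $\bm{x}$. Taken literally, this fails when the last interaction is a specular reflection, because the reflection point moves with the receiver. Your image-method reading of $\bm{s}$ as the image point, iterated back to the transmitter or a fixed diffraction point when several reflections precede the receiver, is what makes the stated formula valid in that case. It is a genuine gain in rigour over the paper's argument.
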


\begin{proof}
The gradient of $\bm{x}\mapsto\lVert\bm{x}-\bm{s}\rVert$ is the unit vector
$(\bm{x}-\bm{s})/\lVert\bm{x}-\bm{s}\rVert$, and $L$ does not depend on the
receiver position inside the visibility region. Dividing by the speed of light
gives a gradient of constant norm $1/c_0$, and the mean value inequality along
the segment joining two points of the convex visibility neighbourhood gives the
Lipschitz bound.
\end{proof}

\begin{proposition}[Behaviour of relative delays]\label{prop:relative}
Let $\tau_{\min}(\bm{x})=\min_m\tau_m(\bm{x})$ and define relative delays
$\tau_k-\tau_{\min}$. Where the visible path set is constant, relative delays are
continuous and $2/c_0$-Lipschitz, including across points where the minimizing
index switches. Across a shadowing boundary where the first-arrival path leaves
the visible set, every relative delay jumps by the gap between the old and the
new first arrivals, which no local geometric quantity bounds.
\end{proposition}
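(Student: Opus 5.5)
The first claim rests on Proposition~\ref{prop:lipschitz} together with two elementary facts about minima and differences. Fix a connected region $R$ on which the visible path set $V$ is constant. Each $\tau_m$ with $m\in V$ is then $1/c_0$-Lipschitz on $R$ by Proposition~\ref{prop:lipschitz}.

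The first fact is that a pointwise minimum of finitely many $L$-Lipschitz functions is $L$-Lipschitz. The argument is as follows. For any $\bm{x},\bm{x}'$, pick $m$ attaining $\tau_{\min}(\bm{x}')$. Then
\[
\tau_{\min}(\bm{x})-\tau_{\min}(\bm{x}')\le \tau_m(\bm{x})-\tau_m(\bm{x}')\le \lVert\bm{x}-\bm{x}'\rVert/c_0,
\]
and swapping the roles of $\bm{x}$ and $\bm{x}'$ gives the reverse inequality. This argument never uses which index attains the minimum. That is the precise sense in which continuity survives a switch of the minimizing index: $\tau_{\min}$ develops a kink there but not a jump.

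The second fact is the triangle inequality, which shows that $\tau_k-\tau_{\min}$ is $(1/c_0+1/c_0)=2/c_0$-Lipschitz. Continuity follows immediately. One caveat should be stated explicitly: Proposition~\ref{prop:lipschitz} gives the Lipschitz bound along segments inside a convex visibility neighbourhood. For a nonconvex region of constant visibility I would chain the bound along a polygonal path inside $R$. This yields the $2/c_0$ bound in the intrinsic (path) metric of $R$, which coincides with the Euclidean one locally.

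For the jump claim, I would analyse the one-sided limits at a boundary point $\bm{x}_0$. On one side the first-arrival path $m^\star$ is visible, so $\tau_{\min}^-(\bm{x}_0)=\tau_{m^\star}(\bm{x}_0)$. On the other side it is gone, so $\tau_{\min}^+(\bm{x}_0)=\min_{m\in V\setminus\{m^\star\}}\tau_m(\bm{x}_0)$. Every surviving path $k$ has $\tau_k$ continuous across the boundary; I assume it does not itself cross a visibility edge there. Hence
\[
\lim^+(\tau_k-\tau_{\min})-\lim^-(\tau_k-\tau_{\min})=-(\tau_{\min}^+-\tau_{\min}^-),
\]
a common shift for every relative delay that equals the gap between the new and old first arrivals.

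The main obstacle is making ``no local geometric quantity bounds'' precise. I would handle it with an explicit family of configurations. Take a line-of-sight path blocked by an obstacle edge. The next-earliest surviving path is a reflection whose upstream length $L$, as in Proposition~\ref{prop:lipschitz}, is set by a reflector that can sit arbitrarily far away. Moving that reflector changes the gap $\tau_{\min}^+-\tau_{\min}^-$ without altering anything in a neighbourhood of $\bm{x}_0$: the local gradients and the blocking edge stay fixed. So for any proposed bound $G$ built from local data, one can choose a configuration whose gap exceeds $G$. I expect only this construction needs care, because one must check that the reflected path remains the earliest surviving arrival. Placing all other reflectors farther still ensures this.
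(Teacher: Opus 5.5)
Your proposal is correct and follows the paper's proof. The minimum of finitely many $1/c_0$-Lipschitz delays is $1/c_0$-Lipschitz, so an argmin switch is a kink rather than a jump. The triangle inequality then gives $2/c_0$, and losing the anchoring path shifts every surviving relative delay by the same $-\delta$. Your far-reflector construction makes explicit what the paper only asserts by calling $\delta$ a global scene property.

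Your convexity caveat is unnecessary. On any set where $L$ and $\bm{s}$ are fixed, the closed form $(L+\lVert\bm{x}-\bm{s}\rVert)/c_0$ is already Euclidean $1/c_0$-Lipschitz by the reverse triangle inequality. So you need neither the intrinsic-metric weakening nor the polygonal chaining.
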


\begin{proof}
The minimum of finitely many $1/c_0$-Lipschitz functions is $1/c_0$-Lipschitz, so
each difference is continuous and $2/c_0$-Lipschitz while all participating paths
remain visible, and an argmin switch is a kink rather than a jump because the two
competing first arrivals are equal at the switch. If the anchoring path
disappears at a boundary, then $\tau_{\min}$ jumps upward by
$\delta=\tau_{\mathrm{new}}-\tau_{\mathrm{old}}>0$ and every relative delay jumps
by $-\delta$ at once. The magnitude $\delta$ is set by which path becomes first, a
global scene property, so no local bound applies.
\end{proof}

Proposition~\ref{prop:relative} is why the synchronized mode matches on the
absolute axis. The relative jump is set by which path becomes the first arrival,
a global scene property that no local bound controls. On the absolute axis
Proposition~\ref{prop:lipschitz} instead applies path by path, so every delay is
smooth in position. The unknown clock bias is then the only nuisance that
survives, and the matcher seeds its hypotheses by aligning the strongest verified
atom to the strongest database paths. The next result certifies that this
hypothesis set recovers the true bias.

\begin{proposition}[Conditional completeness of the clock hypotheses]\label{prop:clock}
Suppose the atom $j^\star$ maximizing $\hat a_j p_j$ originates from a true path
of the scored cell that is among its two strongest database paths. Then the
hypothesis set contains a candidate $b$ with
$\lvert b-b^{\mathrm{true}}\rvert\le\lvert\varepsilon_{j^\star}\rvert$, where
$\varepsilon_{j^\star}$ is the delay error of that atom. If the assignment under
this candidate matches $n$ true pairs with independent errors of scale $\sigma$,
the polished estimate satisfies
$\hat b-b^{\mathrm{true}}=O_p\bigl(\sigma/\sqrt{n}\bigr)$.
\end{proposition}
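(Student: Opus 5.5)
We set up the synchronized-mode observation model explicitly. Write $t_{\mathrm m}=t_0^\star+b^{\mathrm{true}}$ for the trigger. An atom $j$ originating from database path $k$ of the scored cell $i$ sits at absolute position $\hat\tau_j = t_{0,i}+\tau_{ik}+b^{\mathrm{true}}+\varepsilon_j$ on the receiver clock, where $\varepsilon_j$ collects the front-end location error and any database error. The hypothesis generator aligns the strongest verified atom $j^\star=\arg\max_j \hat a_j p_j$ to each of the two strongest database paths $k\in\{k_1,k_2\}$ of cell $i$. It emits the candidates $b_k=\hat\tau_{j^\star}-(t_{0,i}+\tau_{ik})$. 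The first claim therefore reduces to exhibiting the right index.

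By hypothesis, $j^\star$ comes from a true path $k^\star\in\{k_1,k_2\}$. Substituting the model into $b_{k^\star}$ gives $b_{k^\star}-b^{\mathrm{true}}=\varepsilon_{j^\star}$ exactly, so $\lvert b_{k^\star}-b^{\mathrm{true}}\rvert\le\lvert\varepsilon_{j^\star}\rvert$ with equality. The only care needed is bookkeeping: whether the receiver references delays to $t_{\mathrm m}$ and whether $t_{0,i}$ is stored on the same axis. The sign convention must make the bias enter additively with unit coefficient, and this is what Proposition~\ref{prop:lipschitz}'s absolute axis guarantees. If the generator also adds a small grid of offsets around each seed, the bound only improves, since the set is a superset.

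For the rate, fix the candidate $b_{k^\star}$ and its assignment, which matches $n$ true pairs $(k_m,j_m)$. The polish is the weighted least squares problem $\hat b=\arg\min_b\sum_m w_m\bigl(\hat\tau_{j_m}-t_{0,i}-\tau_{ik_m}-b\bigr)^2$, possibly with a prior term $b^2/\sigma_{\mathrm c}^2$. Its closed form is $\hat b=\sum_m w_m(\ldots)/(\sum_m w_m+\sigma_{\mathrm c}^{-2})$. Substituting the model gives $\hat b-b^{\mathrm{true}}=\sum_m w_m\varepsilon_{j_m}/W$ plus a prior-shrinkage bias of order $b^{\mathrm{true}}/(W\sigma_{\mathrm c}^2)$, where $W=\sum_m w_m+\sigma_{\mathrm c}^{-2}$. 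With weights bounded above and below, $W\asymp n/\sigma^2$, and the shrinkage bias is $O(\sigma^2/n)$, which is negligible. With independent zero-mean errors of scale $\sigma$, the variance of the weighted mean is $O(\sigma^2/n)$, and Chebyshev's inequality gives $O_p(\sigma/\sqrt n)$.

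The main obstacle is that the errors are Student-t rather than Gaussian, and the atom locations may carry small biases. For $\nu>2$ the second moment is finite and Chebyshev suffices. For $\nu\le 2$ I would instead argue that the quadratic polish is robustified, or restrict to errors with a finite second moment, stating that assumption explicitly. A second subtlety is that the assignment is chosen using the data, so the $n$ matched errors are not literally an independent sample. I would handle this by conditioning on the assignment, as the statement itself does, and assume zero-mean errors given the match.
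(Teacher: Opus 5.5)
Your proposal is correct and follows the paper's proof. The seed obtained by aligning $j^\star$ to its originating path $k^\star$ equals $b^{\mathrm{true}}+\varepsilon_{j^\star}$ exactly under the absolute-axis model, and the polish is a weighted mean of $n$ independent errors whose variance bound gives the $O_p(\sigma/\sqrt{n})$ rate. Your added caveats (a finite second moment, which Student-t errors lack for $\nu\le 2$; bounded weight ratios; negligible prior shrinkage; conditioning on the assignment) make explicit what the paper's ``standard variance bound'' leaves implicit. One small correction: the bias enters with unit coefficient because of the synchronized-mode model $t_{\mathrm m}=t_0^\star+b$, not because of Proposition~\ref{prop:lipschitz}.
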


\begin{proof}
Writing the originating path index as $k^\star$, the candidate generated by that
alignment is
$b=\hat\tau^{\mathrm{abs}}_{j^\star}-(t_{0,i}+\tau_{ik^\star})=b^{\mathrm{true}}+\varepsilon_{j^\star}$
by the absolute-axis model, which proves the first claim. The polish is a weighted
average of gated pair residuals,
$\hat b=\sum_j w_j(b^{\mathrm{true}}+\varepsilon_j)/\sum_j w_j$, so its error is a
weighted mean of $n$ independent zero-location errors and the standard variance
bound yields the stated rate.
\end{proof}

\begin{proposition}[Exactness and robustness of the capped miss cost]\label{prop:cap}
Define clipped detection probabilities
$\tilde d_k=\min\bigl(d_k,\,1-e^{-\kappa_{\mathrm m}}\bigr)$. Then the capped score
equals the exact score of Proposition~\ref{prop:marginal} evaluated under
$\{\tilde d_k\}$, and the influence of any single missed path on the cell score is
bounded by $\kappa_{\mathrm m}$. If at most $\varepsilon K$ true paths are missed,
the truthful cell loses at most $\varepsilon K\kappa_{\mathrm m}$ relative to full
detection.
\end{proposition}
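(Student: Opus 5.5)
The plan is to read Proposition~\ref{prop:marginal} under the clipped probabilities $\tilde d_k$ and compare it term by term with the capped score. The capped score has two parts. The base term is $\sum_k \max\bigl(\log(1-d_k),-\kappa_{\mathrm m}\bigr)$. In each matched pair, the repayment factor $\tfrac{d_k}{1-d_k}$ is replaced by its capped version, which uses the same clipped miss cost.

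The first step is a scalar identity. Since $\log(1-x)$ is decreasing, $\log(1-\tilde d_k)=\log\bigl(1-\min(d_k,1-e^{-\kappa_{\mathrm m}})\bigr)=\max\bigl(\log(1-d_k),-\kappa_{\mathrm m}\bigr)$. This shows that the capped base term is exactly $\sum_k\log(1-\tilde d_k)$. For the matched part, I would rewrite the repayment $\tfrac{d_k}{1-d_k}$ inside \eqref{eq:score-full} as $d_k\cdot e^{-\log(1-d_k)}$. The capped matcher replaces the miss cost $-\log(1-d_k)$ by $\min(-\log(1-d_k),\kappa_{\mathrm m})$ and, consistently, the detection weight by $\tilde d_k$. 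That yields $\tfrac{\tilde d_k}{1-\tilde d_k}$.

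Substituting both replacements gives \eqref{eq:score-full} with $d_k\to\tilde d_k$. The derivation of Proposition~\ref{prop:marginal} never used anything about $d_k$ beyond $d_k\in(0,1)$, and $\tilde d_k\in(0,1)$ because $\kappa_{\mathrm m}>0$. So the generative model with $\tilde d_k$ is well defined, and the capped score is its exact marginal likelihood. The main obstacle is definitional: the paper describes the cap in words (``in both the base term and the matched repayment''), so I would first fix the capped score formally as the expression above. The exactness claim then becomes an identity rather than something that needs a separate argument.

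For robustness, I would define the influence of missing path $k$ as the change in the score when $k$ goes from matched to unmatched, with the rest of the assignment held fixed. When $k$ is unmatched it contributes only $\log(1-\tilde d_k)\in[-\kappa_{\mathrm m},0]$, so the miss is charged at most $\kappa_{\mathrm m}$ relative to a neutral contribution of zero. The bound I would state is that the unmatched-path penalty is bounded by $\kappa_{\mathrm m}$. I would note that the loss of the matched benefit $B_{kj}$ is a separate, atom-dependent quantity; the intended claim concerns the miss cost, measured against the full-detection reference in which the debt $\log(1-\tilde d_k)$ is repaid.

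Finally, summing the per-miss bound over at most $\varepsilon K$ missed paths gives a total loss of at most $\varepsilon K\kappa_{\mathrm m}$, because contributions are additive across paths in \eqref{eq:score-full}.
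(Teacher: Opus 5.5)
Your proposal follows the paper's proof essentially verbatim: the pointwise identity $\max\bigl(\log(1-d_k),-\kappa_{\mathrm m}\bigr)=\log(1-\tilde d_k)$, together with the same clipping in the matched repayment, gives exactness under $\{\tilde d_k\}$, and the unmatched base term $\log(1-\tilde d_k)\ge-\kappa_{\mathrm m}$, summed over at most $\varepsilon K$ misses, gives both robustness bounds. Your caveat is correct and applies equally to the paper: under the matched-to-unmatched definition of influence you first wrote down, the change is the forgone benefit $B_{kj}$, which $\kappa_{\mathrm m}$ does not control, so the proposition really bounds only the charged miss cost $\log(1-\tilde d_k)$, and the paper's proof tacitly adopts that same reading.
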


\begin{proof}
The identity $\max\bigl(\log(1-d_k),-\kappa_{\mathrm m}\bigr)=\log(1-\tilde d_k)$
holds pointwise, and the matched repayment uses the same clipped quantity, so the
capped score is the exact likelihood ratio under the clipped detection model. A
missed path contributes its base term $\log(1-\tilde d_k)\ge-\kappa_{\mathrm m}$,
which bounds its influence, and summing over at most $\varepsilon K$ misses gives
the total.
\end{proof}

\begin{proposition}[Restricted minimum-mean-square centroid]\label{prop:centroid}
Let $\pi_i\propto\exp(s_i/T)$ on the cells of the ball $\mathcal{B}$ around the
argmax, and assume the temperature is calibrated so that $\pi$ equals the
posterior of the cell index restricted to $\mathcal{B}$, with the truth lying in
$\mathcal{B}$ almost surely. Then $\hat{\bm{x}}=\sum_{i\in\mathcal{B}}\pi_i\bm{x}_i$
minimizes the expected squared error among all estimators measurable with respect
to the observation, and in particular
\begin{align}
\mathbb{E}\,\lVert\hat{\bm{x}}-\bm{x}^\star\rVert^2
\le\mathbb{E}\,\lVert\bm{x}_{\mathrm{argmax}}-\bm{x}^\star\rVert^2 .
\label{eq:mmse}
\end{align}
\end{proposition}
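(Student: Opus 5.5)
The plan is to treat this as the standard fact that the conditional mean minimizes mean-square error, and then read off \eqref{eq:mmse} by comparing with the particular estimator $\bm{x}_{\mathrm{argmax}}$. Let $\mathcal{F}$ be the $\sigma$-algebra generated by the observation, which determines the scores $s_i$, the argmax, and the ball $\mathcal{B}$. The model is one in which the truth is a cell index $I^\star$ with $\bm{x}^\star=\bm{x}_{I^\star}$. The calibration hypothesis says $\Pr(I^\star=i\mid\mathcal{F})=\pi_i$ for $i\in\mathcal{B}$. Because the truth lies in $\mathcal{B}$ almost surely, these restricted probabilities sum to one and no posterior mass is lost outside the ball. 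Hence $\hat{\bm{x}}=\sum_{i\in\mathcal{B}}\pi_i\bm{x}_i=\mathbb{E}[\bm{x}^\star\mid\mathcal{F}]$ almost surely. This identification is the one step that uses the hypotheses, and I expect it to be the main subtlety. The ball $\mathcal{B}$ is itself random, being $\mathcal{F}$-measurable, so I would condition on $\mathcal{F}$ first, with $\mathcal{B}$ and the softmax weights then fixed, and only afterwards take outer expectations. I would also note explicitly that if the truth could fall outside $\mathcal{B}$, the restricted softmax would be a renormalized posterior and the argument would only give optimality conditional on the event $\{I^\star\in\mathcal{B}\}$.

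Next, take any $\mathcal{F}$-measurable estimator $\bm{g}$ with finite second moment. The positions are bounded on the finite grid, so all moments exist. Write
\begin{align}
\lVert\bm{g}-\bm{x}^\star\rVert^2=\lVert\bm{g}-\hat{\bm{x}}\rVert^2+\lVert\hat{\bm{x}}-\bm{x}^\star\rVert^2+2(\bm{g}-\hat{\bm{x}})^{\top}(\hat{\bm{x}}-\bm{x}^\star).
\end{align}
Conditioning the cross term on $\mathcal{F}$, the factor $\bm{g}-\hat{\bm{x}}$ is measurable and can be pulled out. The remaining factor has conditional mean $\hat{\bm{x}}-\mathbb{E}[\bm{x}^\star\mid\mathcal{F}]=\bm{0}$, so the cross term vanishes in expectation by the tower property. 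This gives $\mathbb{E}\lVert\bm{g}-\bm{x}^\star\rVert^2=\mathbb{E}\lVert\bm{g}-\hat{\bm{x}}\rVert^2+\mathbb{E}\lVert\hat{\bm{x}}-\bm{x}^\star\rVert^2\ge\mathbb{E}\lVert\hat{\bm{x}}-\bm{x}^\star\rVert^2$. Equality holds iff $\bm{g}=\hat{\bm{x}}$ almost surely, which is the claimed optimality.

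Finally, $\bm{x}_{\mathrm{argmax}}$ is a function of the scores alone, with ties broken by any fixed deterministic rule, so it is $\mathcal{F}$-measurable. Substituting $\bm{g}=\bm{x}_{\mathrm{argmax}}$ yields \eqref{eq:mmse}. The gap in that inequality equals $\mathbb{E}\lVert\bm{x}_{\mathrm{argmax}}-\hat{\bm{x}}\rVert^2$, which is strictly positive whenever the restricted posterior is not a point mass on the argmax. This explains why the centroid removes grid-ring jitter.
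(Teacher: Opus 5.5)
Your proposal is correct and follows the same route as the paper. Under the calibration and containment hypotheses the restricted softmax centroid is the conditional mean $\mathbb{E}[\bm{x}^\star\mid\mathcal{F}]$, the orthogonality principle makes it optimal among observation-measurable estimators, and plugging in the $\mathcal{F}$-measurable argmax estimator gives \eqref{eq:mmse}; your explicit conditioning on the random ball $\mathcal{B}$ just spells out what the paper leaves implicit.

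One small correction to your closing aside: the gap $\mathbb{E}\lVert\bm{x}_{\mathrm{argmax}}-\hat{\bm{x}}\rVert^2$ is positive exactly when $\hat{\bm{x}}\neq\bm{x}_{\mathrm{argmax}}$ with positive probability. A posterior that is not a point mass does not guarantee this, since mass spread symmetrically around the argmax cell leaves the centroid exactly on it.
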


\begin{proof}
Under the calibration the conditional distribution of the true position is
supported on $\{\bm{x}_i\}_{i\in\mathcal{B}}$ with weights $\pi_i$, so
$\hat{\bm{x}}$ is the conditional mean, which minimizes conditional expected
squared error by the orthogonality principle. Taking outer expectations preserves
the inequality against the argmax, and the ball restriction prevents averaging
across separated posterior modes where the unrestricted mean is inadmissible.
\end{proof}

\begin{proposition}[Generalized least squares front end]\label{prop:gls}
Consider the linear observation $\bm{y}=\bm{D}\bm{\alpha}+\bm{e}$ with
$\mathrm{Cov}(\bm{e})=\operatorname{diag}(v_1,\dots,v_n)$ and
$v_i=(\mu_i/\sqrt{S}+n_0)^2$ to first order. Among diagonal weighted least
squares estimators, the whitened solve with weights $w_i=v_i^{-1/2}$ attains the
minimum-variance unbiased linear estimate of $\bm{\alpha}$ on the active support
by the Gauss-Markov theorem, and the non-negativity constraint preserves the
whitening geometry on the feasible set.
\end{proposition}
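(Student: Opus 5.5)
The plan is to reduce the statement to the classical Gauss--Markov theorem by whitening, and then to treat the non-negativity constraint as a restriction to a fixed active support.

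\emph{Step 1 (linear model).} Justify the first-order variance model. Averaging $S$ independent looks gives a Gamma-distributed power with mean $\mu_i$ and standard deviation $\mu_i/\sqrt{S}$. An independent noise-floor fluctuation of scale $n_0$ adds to this. Linearizing the standard deviations additively gives $\sqrt{v_i}=\mu_i/\sqrt{S}+n_0$. The errors at distinct delay samples are treated as uncorrelated, so $\mathrm{Cov}(\bm e)=\bm V=\operatorname{diag}(v_i)$. The observation is then the linear model $\bm y=\bm D\bm\alpha+\bm e$ with $\mathbb{E}\bm e=0$.

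\emph{Step 2 (whitening).} Scale each residual by $w_i=v_i^{-1/2}$, that is, premultiply by $\bm V^{-1/2}$. This gives the model
\[
\tilde{\bm y}=\tilde{\bm D}\bm\alpha+\tilde{\bm e},\qquad \mathrm{Cov}(\tilde{\bm e})=\bm I .
\]
The whitened objective is exactly \eqref{eq:nnls} with $\bm W=\bm V^{-1}$, because the residual weights $w_i$ square to the quadratic-form weights $v_i^{-1}$. On a fixed active support $\mathcal S$ with $\bm D_{\mathcal S}$ of full column rank, the unconstrained minimizer is
\[
\hat{\bm\alpha}_{\mathcal S}=(\bm D_{\mathcal S}^\top\bm V^{-1}\bm D_{\mathcal S})^{-1}\bm D_{\mathcal S}^\top\bm V^{-1}\bm y .
\]

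\emph{Step 3 (Gauss--Markov).} Invoke Gauss--Markov in the whitened coordinates. Any linear unbiased estimator $\bm C\bm y$ with $\bm C\bm D_{\mathcal S}=\bm I$ satisfies
\[
\mathrm{Cov}(\bm C\bm y)-(\bm D_{\mathcal S}^\top\bm V^{-1}\bm D_{\mathcal S})^{-1}=\bm C\bm V\bm C^\top-\bm C\bm D_{\mathcal S}(\bm D_{\mathcal S}^\top\bm V^{-1}\bm D_{\mathcal S})^{-1}\bm D_{\mathcal S}^\top\bm C^\top .
\]
Write the right-hand side as $\bm C\bm V^{1/2}(\bm I-\bm P)\bm V^{1/2}\bm C^\top$, where $\bm P$ is the orthogonal projector onto the range of $\bm V^{-1/2}\bm D_{\mathcal S}$. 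This expression is positive semidefinite. In particular, every other diagonal weighting $\bm\Omega$ gives a valid linear unbiased estimator whose covariance dominates that of the whitened solve. Equality in the projector bound identifies the optimal weights as proportional to $v_i^{-1}$, which is the weighting of Step 2.

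\emph{Step 4 (non-negativity).} Argue via the KKT conditions that the NNLS solution coincides with the whitened least-squares solution on its active set: the gradient vanishes on $\mathcal S$ and is non-negative off it. The constraint therefore only selects the support. It acts as a Euclidean projection in the whitened metric $\lVert\bm V^{-1/2}\cdot\rVert$, so it "preserves the whitening geometry" on the feasible set.

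\emph{Main obstacle.} The main difficulty is making Step 4 honest. The active set is data-dependent, so exact unbiasedness holds only conditionally on a fixed support. One natural condition is that the support is correctly identified and lies in the interior of the orthant, which holds with high probability at adequate SNR. A second gap is that $\mu_i$ inside $v_i$ depends on the unknown $\bm\alpha$. I would handle this by treating the weights as fixed at the first-order plug-in, consistent with the ``to first order'' qualifier. With both caveats stated, I would claim only this first-order, support-conditional optimality.
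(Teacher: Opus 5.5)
Your proposal is correct and follows the paper's route. Both whiten by $\operatorname{diag}(w_i)$ to a homoscedastic system, apply Gauss--Markov there, and map back to generalized least squares with the $\mu_i$-dependence frozen at a first-order plug-in. Both then read the non-negativity constraint as the same whitened quadratic solved over a convex set. You make explicit the Gauss--Markov projector identity and the KKT active-set argument that the paper only cites, and your caveat about conditioning on a fixed support is a fair sharpening. One small correction: equality in your projector bound pins down the estimator, not necessarily the diagonal weights, and the estimator is all the statement needs.
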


\begin{proof}
Left-multiplying by $\operatorname{diag}(w_i)$ produces a homoscedastic system
for which ordinary least squares is best linear unbiased, and mapping back gives
the generalized least squares estimator. The plug-in $\hat\mu_i=y_i$ is a
one-step approximation whose error is second order in the noise, and projecting
onto the non-negative orthant solves the same whitened quadratic over a convex
set.
\end{proof}

\begin{proposition}[Finite-sample conformal coverage]\label{prop:conformal}
Let $e_1,\dots,e_n$ be calibration errors and $e_{\mathrm{test}}$ a test error,
jointly exchangeable. With $q$ the $\lceil(1-\alpha)(n+1)\rceil$-th smallest
calibration error, the radius $q$ satisfies
$\mathbb{P}(e_{\mathrm{test}}\le q)\ge 1-\alpha$ without distributional
assumptions.
\end{proposition}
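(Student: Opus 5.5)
The plan is the standard rank argument for split conformal prediction, reduced to a counting statement about exchangeable ranks. Set $N=n+1$ and consider the augmented vector $(e_1,\dots,e_n,e_{N})$ with $e_N=e_{\mathrm{test}}$. Let $m=\lceil(1-\alpha)(n+1)\rceil$. If $m>n$, the quantile $q$ is taken to be $+\infty$ by the usual convention, and the claim is trivial. So assume $m\le n$; then $q=e_{(m)}$ is the $m$-th order statistic of the calibration errors alone.

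\textbf{Step 1 (deterministic reduction).} I will show $\{e_{\mathrm{test}}\le q\}\supseteq\{e_{\mathrm{test}}\text{ is among the }m\text{ smallest of the }N\text{ values}\}$, with ties broken in favour of the test point. Suppose $e_{\mathrm{test}}>e_{(m)}$. Then at least $m$ calibration errors lie strictly below $e_{\mathrm{test}}$, so the rank of $e_{\mathrm{test}}$ among the $N$ values is at least $m+1$. The contrapositive gives the inclusion. In the other direction, $q$ is also the $m$-th smallest of the augmented set whenever $e_{\mathrm{test}}>q$, which makes the two events equivalent.

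\textbf{Step 2 (exchangeability).} Joint exchangeability makes the distribution of the augmented vector invariant under permutations. Hence the probability that the last coordinate is among the $m$ smallest equals the same probability for every coordinate. Summing the $N$ indicator events gives at least $m$ almost surely, which yields $\mathbb{P}(\mathrm{rank}(e_N)\le m)\ge m/N$. With ties broken uniformly at random, or when the errors are almost surely distinct, the sum is exactly $m$ and the bound holds with equality. Step 1 shows that the weak inequality $e_{\mathrm{test}}\le q$ only helps under ties. Finally, $m/(n+1)\ge 1-\alpha$ by the definition of the ceiling, which gives the claimed coverage.

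\textbf{Main obstacle.} The delicate point is the handling of ties in Step 2. Without a tie-breaking convention, "rank" is ambiguous, and the symmetry argument must be run on the event $\{\#\{i\le N: e_i<e_N\}<m\}$. That event is implied by the corresponding event under any tie-break and is therefore at least as likely. I would first attempt to avoid ties altogether by adding an independent uniform jitter to define strict ranks, and then pass back to $q$ through Step 1. I would also add a remark on scope: the calibration and test queries must be drawn from the same survey process with the trained pipeline held fixed, so that the errors are exchangeable as assumed. That requirement is exactly what split calibration on held-out cells provides.
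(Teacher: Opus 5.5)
Your proposal is correct and follows the same route as the paper. Exchangeability makes the rank of $e_{\mathrm{test}}$ among the $n+1$ errors uniform, the event that this rank is at most $\lceil(1-\alpha)(n+1)\rceil$ is contained in $\{e_{\mathrm{test}}\le q\}$, and counting gives probability at least $1-\alpha$; your symmetric events $\{\#\{i\le N: e_i<e_N\}<m\}$ for ties and the convention $q=+\infty$ when $m>n$ just make precise what the paper compresses into ``ties only help.''
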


\begin{proof}
Exchangeability makes the rank of $e_{\mathrm{test}}$ among the $n+1$ errors
uniform, up to ties which only help. The event $e_{\mathrm{test}}\le q$ contains
the event that this rank is at most $\lceil(1-\alpha)(n+1)\rceil$, whose
probability is at least $1-\alpha$ by direct counting.
\end{proof}

A final result bounds what any method can achieve on this observable. It
quantifies the information floor of two-point resolution that the identification
bottleneck inherits. The argument tests one path against two unresolved paths and
applies Le Cam's two-point method to the snapshot likelihoods. The data-processing
inequality then extends the bound to any statistic of the snapshots, including the
averaged power that the front end consumes. The resulting error floor decays only
as the fourth power of the separation, which caps every detector regardless of its
design.

\begin{theorem}[Two-point ambiguity, fourth-power law]
\label{thm:ambiguity}
In the Rayleigh sounding model, snapshot $s$ observes
$x_s=\sum_k g_{k,s}\,\psi(\tau_k)+n_s$ with $\psi$ the sinc amplitude kernel,
gains $g_{k,s}\sim\mathcal{CN}(0,a_k)$ independent across paths and snapshots, and
noise variance $\sigma^2$. Test one path $(\tau_0,a)$ against two paths at
$\tau_0\pm\Delta/2$ with power $a/2$, with covariances
$C_0=a\,\psi_0\psi_0^{\mathsf H}+\sigma^2 I$ and
$C_1=\tfrac{a}{2}(\psi_+\psi_+^{\mathsf H}+\psi_-\psi_-^{\mathsf H})+\sigma^2 I$.
Every detector on $S$ snapshots, or on any function of them such as the averaged
power, satisfies
\begin{align}
\inf_{\mathcal{T}}\max_{r\in\{0,1\}}\mathbb{P}_r(\mathcal{T}\neq r)
&\ge\frac{1}{2}\Big(1-\sqrt{1-\rho(\Delta)^{2S}}\Big),
\label{eq:lecam}\\
\rho(\Delta)&=\frac{\det(C_0)^{1/2}\det(C_1)^{1/2}}
{\det\!\big(\tfrac{1}{2}(C_0+C_1)\big)},
\label{eq:rho}
\end{align}
and with $H=\psi'\psi'^{\mathsf H}+\tfrac12(\psi_0\psi''^{\mathsf H}+\psi''\psi_0^{\mathsf H})$,
\begin{align}
-\log\rho(\Delta)=\frac{a^2\Delta^4}{128}\,
\big\lVert C_0^{-1/2}HC_0^{-1/2}\big\rVert_F^2\,(1+o(1)),
\label{eq:rho-expansion}
\end{align}
so the error approaches one half at the rate $S\,a^2\Delta^4/\sigma^4$.
\end{theorem}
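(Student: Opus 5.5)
The plan is Le Cam's two-point method with the Hellinger (Bhattacharyya) affinity as the distance between hypotheses, followed by a second-order matrix perturbation of that affinity in $\Delta$.

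\textbf{Step 1: the lower bound \eqref{eq:lecam}.} Under hypothesis $r$ the snapshots $x_1,\dots,x_S$ are i.i.d.\ $\mathcal{CN}(0,C_r)$, because the gains are independent circular Gaussians and are independent of the noise.
\begin{itemize}
\item Le Cam's inequality gives $\inf_{\mathcal T}\max_r\mathbb{P}_r(\mathcal T\neq r)\ge\frac12\bigl(1-\mathrm{TV}(P_0^{\otimes S},P_1^{\otimes S})\bigr)$.
\item The Hellinger--total-variation inequality $\mathrm{TV}\le\sqrt{1-\mathrm{BC}^2}$ bounds the total variation by the Bhattacharyya coefficient $\mathrm{BC}=\int\sqrt{p_0p_1}$.
\item $\mathrm{BC}$ is multiplicative over independent snapshots, so $\mathrm{BC}(P_0^{\otimes S},P_1^{\otimes S})=\mathrm{BC}(P_0,P_1)^S$.
\item For zero-mean complex Gaussians a direct integral gives $\int\sqrt{p_0p_1}=\det(C_0)^{1/2}\det(C_1)^{1/2}/\det\bigl(\tfrac12(C_0+C_1)\bigr)=\rho(\Delta)$. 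In the complex case there are no extra square roots, because the density carries $\det(C)^{-1}$. I will check this by completing the Gaussian integral with precision $\tfrac12(C_0^{-1}+C_1^{-1})$.
\end{itemize}
Substituting gives \eqref{eq:lecam}. Any statistic of the snapshots, including the averaged power $y$ or $Y$, is a Markov kernel applied to $(x_s)$. Total variation contracts under such kernels by the data-processing inequality, so the same bound holds for detectors on those statistics.

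\textbf{Step 2: the perturbation $C_1-C_0$.} Expand $\psi_\pm=\psi_0\pm\tfrac{\Delta}{2}\psi'+\tfrac{\Delta^2}{8}\psi''\pm\tfrac{\Delta^3}{48}\psi'''+O(\Delta^4)$. The first-order and third-order terms of $\psi_+\psi_+^{\mathsf H}+\psi_-\psi_-^{\mathsf H}$ cancel under the $\pm$ symmetry, leaving
\[
E:=C_1-C_0=\tfrac{a\Delta^2}{4}H+O(\Delta^4).
\]
The $a\psi_0\psi_0^{\mathsf H}$ terms cancel exactly because the total power is matched at $a$.

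\textbf{Step 3: the expansion \eqref{eq:rho-expansion}.} Whiten by setting $M=C_0^{-1/2}EC_0^{-1/2}$. Then
\[
-\log\rho=-\tfrac12\log\det(I+M)+\log\det\bigl(I+\tfrac12M\bigr).
\]
Using $\log\det(I+X)=\operatorname{tr}X-\tfrac12\operatorname{tr}X^2+O(\lVert X\rVert^3)$:
\begin{itemize}
\item the linear terms cancel, since $-\tfrac12\operatorname{tr}M+\tfrac12\operatorname{tr}M=0$;
\item the quadratic terms give $\tfrac14\operatorname{tr}M^2-\tfrac18\operatorname{tr}M^2=\tfrac18\lVert M\rVert_F^2$, since $M$ is Hermitian.
\end{itemize}
Inserting $M=\tfrac{a\Delta^2}{4}C_0^{-1/2}HC_0^{-1/2}+O(\Delta^4)$ gives $\tfrac{a^2\Delta^4}{128}\lVert C_0^{-1/2}HC_0^{-1/2}\rVert_F^2\bigl(1+o(1)\bigr)$. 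The remainder is $O(\Delta^6)$: the cubic trace term is $O(\lVert M\rVert^3)=O(\Delta^6)$, and the $O(\Delta^4)$ correction in $M$ contributes only through cross terms $O(\Delta^2\cdot\Delta^4)$.

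\textbf{Step 4: the rate.} Since $C_0\succeq\sigma^2I$, we have $\lVert C_0^{-1/2}HC_0^{-1/2}\rVert_F\le\lVert H\rVert_F/\sigma^2$, so $-\log\rho=O(a^2\Delta^4/\sigma^4)$. Then $1-\rho^{2S}=1-e^{2S\log\rho}\le 2S(-\log\rho)$. Hence the error is at least $\tfrac12-O\bigl(\sqrt{S\,a^2\Delta^4/\sigma^4}\bigr)$, and it tends to one half whenever $S\,a^2\Delta^4/\sigma^4\to0$.

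\textbf{Main obstacle.} The delicate step is the expansion. It must track the $\Delta^3$ cancellation and the uniformity of the $o(1)$: the Taylor remainder has to be small relative to $\lVert C_0^{-1/2}HC_0^{-1/2}\rVert_F$, which needs $H\neq0$ and bounded $\psi'''$. I will handle this by bounding the remainder through the smoothness of the sinc kernel and taking $\Delta\to0$ with $a$ and $\sigma$ fixed. I will also verify the complex-Gaussian form of $\rho$ in Step 1, since a wrong exponent there would change the constant $128$.
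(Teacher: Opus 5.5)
Your proposal is correct and follows the same route as the paper: Le Cam's two-point bound, $\mathrm{TV}\le\sqrt{1-\mathrm{BC}^2}$ with the Bhattacharyya coefficient tensorizing to $\rho^S$, data processing to cover the averaged power, the symmetric Taylor expansion $C_1-C_0=\tfrac{a\Delta^2}{4}H+O(\Delta^4)$, and a second-order expansion of the Bhattacharyya divergence. The differences are cosmetic: you compute the complex-Gaussian affinity directly instead of through the real embedding's $4^{-N}\lvert\det C\rvert^2$ identity, and you write out the $\log\det$ expansion behind the constant $1/128$, together with the $H\neq 0$ condition the $(1+o(1))$ claim needs, where the paper only cites that expansion.
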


\begin{proof}
Le Cam's two-point method gives a worst-case error of at least
$\tfrac12(1-\mathrm{TV})$ between the $S$-fold products~\cite{tsybakov2009nonparametric},
and total variation obeys $\mathrm{TV}\le\sqrt{1-\mathrm{BC}^2}$ with the
Bhattacharyya coefficient tensorizing as $\mathrm{BC}=\rho^S$. For circular
complex Gaussians the real embedding has covariance determinant
$4^{-N}\lvert\det C\rvert^2$, and inserting this identity collapses the fourth
roots into \eqref{eq:rho}. The data-processing inequality lets any statistic of
the snapshots only decrease total variation, so \eqref{eq:lecam} applies to the
averaged power. A symmetric Taylor expansion of $\psi$ at $\tau_0$ cancels odd
orders and gives $C_1-C_0=\tfrac{a\Delta^2}{4}H+O(\Delta^4)$, and the
second-order expansion of the Bhattacharyya divergence at equal covariances yields
\eqref{eq:rho-expansion}.
\end{proof}

\section{Experiments}
\label{sec:experiments}

\subsection{Setup}

We evaluate PALM on the real ray-traced DeepMIMO Arizona State University campus
scenario at 3.5\,GHz, generated with a ray tracer~\cite{alkhateeb2019deepmimo,remcom_winsite}. We select the densest
$120\times 80$ meter window of the campus grid, which yields an RM of 1808
cells inside the deployment area. The RM is built from
even-parity grid cells, and the queries are drawn from odd-parity cells, so the
nearest database cell to any query is exactly one meter away. The reported error
therefore measures identification plus one meter of interpolation and never
self-matching. Observations are rendered from the raw rays at 100\,MHz bandwidth
with sixteen beams, per-snapshot random phases, and additive noise, while the
RM stores merged atoms, so a query never sees the stored representation.
The base station and scene are static, so every cell may train the verifier and
the dense expert and calibrate the noise scales. Every tuned quantity is fit on
survey cells disjoint from the test queries.

The baselines span two families that isolate the value of information from the
value of algorithm. The system-level family freezes the estimator to a hard
nearest-database-cell rule and climbs an information ladder. The ladder runs from
total received power, through the angular power spectrum and the power-delay
profile, to the joint coarse correlation, the full beam-by-delay map
correlation, and the super-resolved atoms. The method-level family fixes the
coarse observation and varies only the estimator. It spans nearest neighbour,
$k$-nearest neighbour, weighted $k$-nearest neighbour, a Bayesian
minimum-mean-square estimator, a convolutional regressor and classifier, and
the PALM stages, bounded by an oracle cell and an oracle top-three. We report the median and the
ninetieth-percentile position error, paired-bootstrap confidence intervals on
the median gain, and a split-conformal radius. The classical received-power RM
matched by top-one correlation is the most important baseline to beat, and we
name it the correlation baseline throughout.

The statistical protocol guards against leakage and keeps the comparison paired.
Every learned component trains on database cells only, and every tuned quantity
is fit on a calibration draw disjoint from the test queries, so no test position
informs the model or the operating point. Each query renders one observation, one
joint map, and one clock realization that all estimators share, which keeps the
paired bootstrap valid. We report medians and ninetieth percentiles over 150
held-out queries, with paired-bootstrap confidence intervals on the median gain.
The split-conformal radius calibrates on one half of the nominal errors and
measures coverage on the other half. The protocol therefore scores identification
under a fixed scene without an inverse crime and without a leak from the test set.

\subsection{Main Localization Results}

Table~\ref{tab:headline} reports the headline localization accuracy in the
synchronized mode. PALM reduces the ninetieth-percentile error of the
correlation baseline at every operating point, by 61.5 percent at 15\,dB, by 39.3
percent at 5\,dB, and by 34.2 percent in the single-snapshot regime. It halves
the single-snapshot median from 14.0 to 7.0 meters and improves the zero-decibel
median from 30.0 to 18.4 meters, which are the regimes a deployed localizer cares
about most. The correlation baseline retains a median advantage at high
signal-to-noise ratio, because its clean template snaps to the adjacent one-meter
grid cell and reaches the grid floor, and PALM trails it on the 15\,dB median by
0.75 meters while winning the tail outright. This grid-floor advantage is a
property of the survey spacing rather than of the physics, and the sub-grid
refinement narrows but does not erase it. The headline finding is a tail and
stress-regime win, stated where the data support it and bounded where they do
not.

\begin{table*}[!t]
\centering
\caption{Localization accuracy on the DeepMIMO campus scenario, synchronized mode, 150 held-out queries. Best in each column in \textcolor{bestred}{\textbf{red}}, second best \underline{underlined}. Lower is better for all error columns.}
\label{tab:headline}
\resizebox{0.92\linewidth}{!}{
\begin{tabular}{@{}l|c|c|c|c|c|c|c@{}}
\hline
Method & med@15\,dB & p90@15\,dB & med@5\,dB & p90@5\,dB & med@10\,dB,$S{=}1$ & p90@10\,dB,$S{=}1$ & med@0\,dB \\
\hline
Joint-map correlation & 2.24 & 19.14 & \textcolor{bestred}{\textbf{6.70}} & 34.71 & 11.18 & 50.45 & 30.41 \\
Correlation 1-NN & \textcolor{bestred}{\textbf{1.00}} & 21.10 & 10.05 & 46.66 & 13.96 & 52.26 & 30.02 \\
Correlation 3-NN & 2.24 & 17.75 & 9.81 & 43.03 & 12.40 & 46.70 & 28.15 \\
Weighted $k$-NN & 2.17 & 14.95 & 9.09 & \underline{33.03} & 12.68 & 46.62 & 27.54 \\
CNN classification & 2.24 & 11.00 & 9.22 & 36.51 & 9.95 & 46.44 & 31.80 \\
Original PALM & 2.02 & \textcolor{bestred}{\textbf{7.69}} & 8.46 & 41.71 & \underline{8.57} & \underline{39.48} & \underline{19.81} \\
\rowcolor{blue!8}
PALM (ours) & \underline{1.75} & \underline{8.12} & \underline{6.92} & \textcolor{bestred}{\textbf{28.33}} & \textcolor{bestred}{\textbf{7.01}} & \textcolor{bestred}{\textbf{34.37}} & \textcolor{bestred}{\textbf{18.44}} \\
\hline
\rowcolor{gray!15}
Gain over correlation 1-NN & $+0.75$ & \textcolor{improvecyan}{$-12.97$} & \textcolor{improvecyan}{$-3.13$} & \textcolor{improvecyan}{$-18.33$} & \textcolor{improvecyan}{$-6.95$} & \textcolor{improvecyan}{$-17.89$} & \textcolor{improvecyan}{$-11.58$} \\
\hline
\end{tabular}}
\end{table*}

\subsection{The Value of Information}

The system-level ladder isolates how much each information level contributes
under a fixed estimator. Table~\ref{tab:voi} and Fig.~\ref{fig:voi} show a sharp,
monotone descent as the observation gains physical dimensions. Total received
power localizes to 41.1 meters, the angular spectrum to 13.1 meters, the
power-delay profile to 4.1 meters, and the joint coarse correlation to 3.6
meters at 10\,dB. Intensity alone is near useless, angle and delay each help, and
their joint is best, which quantifies the value of small-scale information. The
ladder then saturates at the one-meter grid floor at high signal-to-noise ratio,
where richer linear features no longer reduce the identification median. The
remaining accuracy comes from the estimator rather than from more information,
and the super-resolved match contributes most where the signal is weak. At
zero decibels the super-resolved atoms localize to 23.3 meters against the 30.0
meters of the joint coarse correlation, so super-resolution helps the
identification precisely where the noise is worst.

\begin{table}[!t]
\centering
\caption{System-level value-of-information ladder at 10\,dB, hard nearest-cell estimator. Best rung per column in \textcolor{bestred}{\textbf{red}}, second best \underline{underlined}; $\Delta$ is the marginal gain over the previous rung.}
\label{tab:voi}
\resizebox{\linewidth}{!}{
\begin{tabular}{@{}l|c|c|c@{}}
\hline
Information level & median (m) & p90 (m) & $\Delta$ (m) \\
\hline
Total received power & 41.10 & 70.34 & --- \\
Angular spectrum only & 13.08 & 46.61 & $-28.02$ \\
Power-delay profile only & 4.12 & 44.87 & $-8.96$ \\
Joint coarse correlation & \underline{3.61} & 23.38 & $-0.51$ \\
Joint beam-delay map & \textcolor{bestred}{\textbf{3.00}} & \underline{22.49} & $-0.61$ \\
Super-resolved atoms (PALM) & 5.00 & \textcolor{bestred}{\textbf{19.45}} & $+2.00$ \\
\hline
\end{tabular}}
\end{table}

\begin{figure}[!t]
\centering
\includegraphics[width=\linewidth]{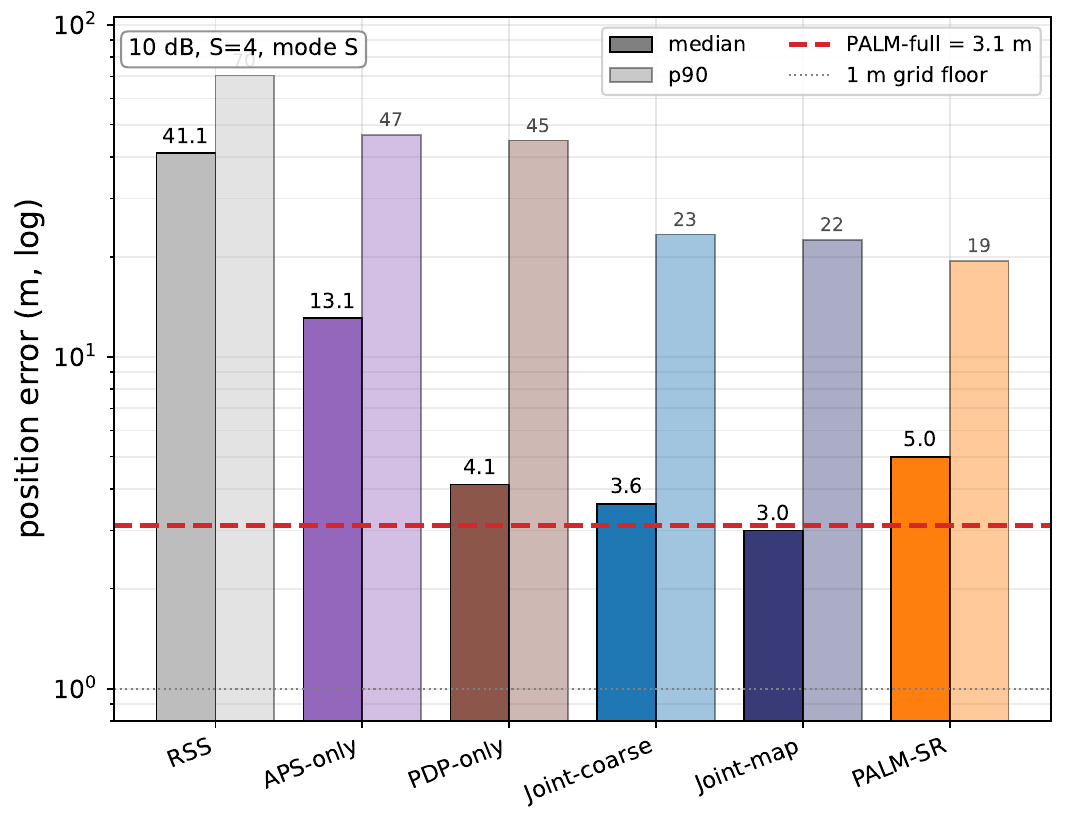}
\caption{System-level information ladder at 10\,dB. Each rung adds one physical
dimension to the observation. The median and the ninetieth percentile descend
sharply through the first four rungs and saturate at the grid floor, while the
full estimator reaches the dashed line below the hard-cell rungs.}
\label{fig:voi}
\end{figure}

\begin{figure*}[!t]
\centering
\begin{subfigure}[t]{0.49\textwidth}
\centering
\includegraphics[width=\linewidth]{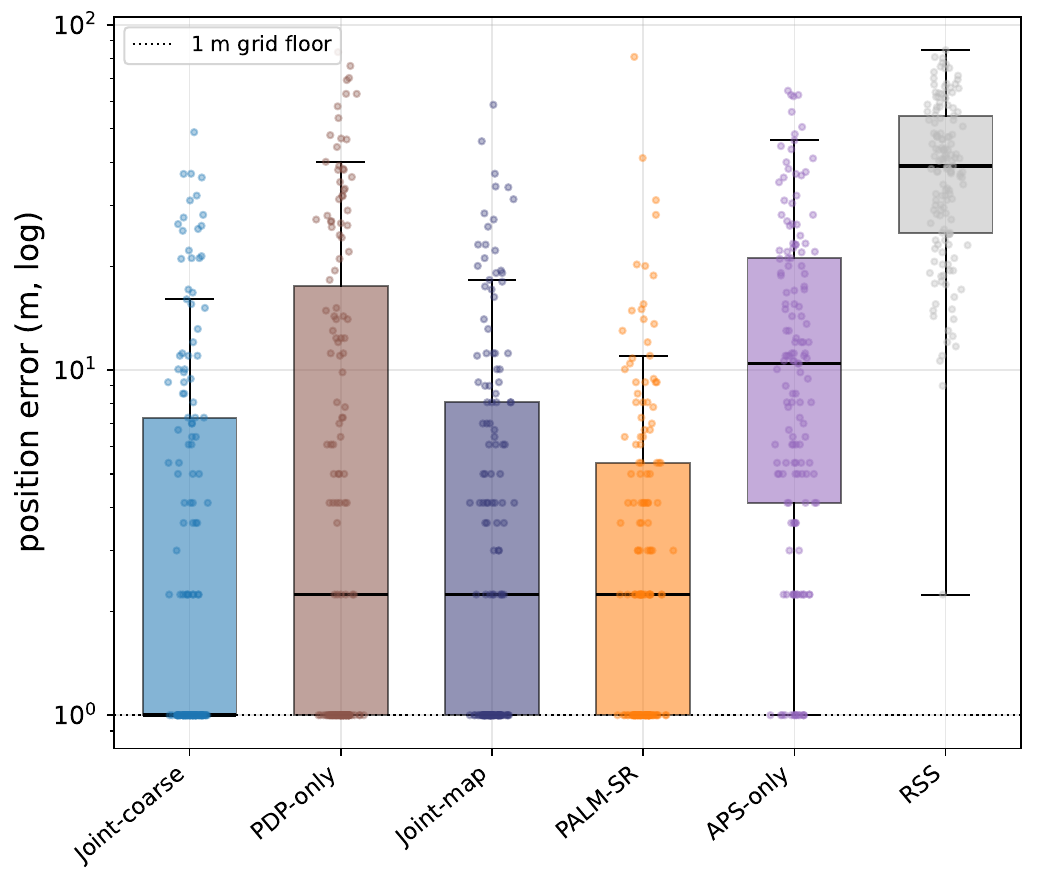}
\caption{System-level information rungs.}
\label{fig:dist_sys}
\end{subfigure}
\hfill
\begin{subfigure}[t]{0.49\textwidth}
\centering
\includegraphics[width=\linewidth]{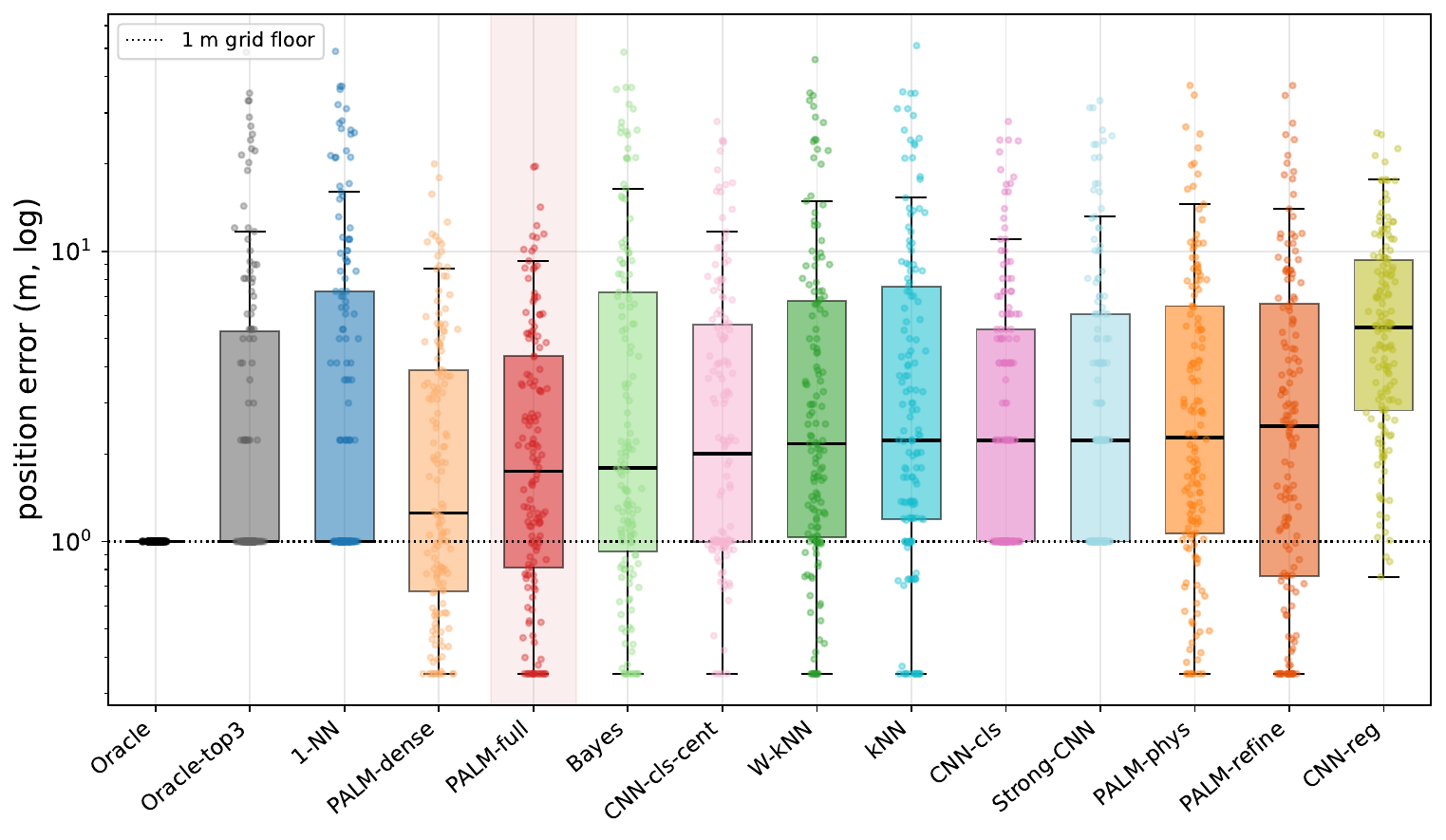}
\caption{Method-level estimators.}
\label{fig:dist_meth}
\end{subfigure}
\caption{Per-query position-error distributions at 15\,dB on a logarithmic axis,
sorted by median, with the one-meter grid floor dotted. Panel (a) climbs the
system-level information ladder and panel (b) varies the method-level estimator
on a fixed observation. PALM in red has the tightest error body and the shortest
tail among the estimators.}
\label{fig:dist}
\end{figure*}

\subsection{The Value of Algorithm}

The method-level ladder fixes the coarse observation and varies the estimator.
The per-query distributions of Fig.~\ref{fig:dist} make the two ladders concrete,
since the system rungs in panel (a) narrow as information is added and PALM is the
tightest estimator in panel (b). PALM has the lowest ninetieth-percentile error
of all classical and learned estimators at every signal-to-noise ratio from
5\,dB upward, as Fig.~\ref{fig:p90} shows, and it cuts the tail of the
correlation baseline by 34 to 62 percent across conditions. The classical weighted $k$-nearest-neighbour
estimator is the strongest on the median at 10\,dB, and a high-capacity
convolutional classifier matches the small one, so capacity is not the lever. The
oracle that retrieves the true cell among the top three reaches one meter at
15\,dB. This shows that the residual error is selection rather than retrieval,
and PALM closes part of that selection gap. At zero decibels the oracle top-three
is itself 25 meters, so the loss there is retrieval, and no estimator can win
when the right cell is not retrievable. The value of the PALM estimator is
therefore concentrated in the tail and the stress regimes, exactly where a
correlation baseline degrades.

\begin{table}[!t]
\centering
\caption{Method-level estimator ladder on the fixed coarse observation, synchronized mode. Oracles are upper bounds and are not marked. Best non-oracle entry per column in \textcolor{bestred}{\textbf{red}}, second best \underline{underlined}.}
\label{tab:method}
\resizebox{\linewidth}{!}{
\begin{tabular}{@{}l|c|c|c|c@{}}
\hline
Estimator & med@10\,dB & p90@10\,dB & med@0\,dB & hit@15\,dB \\
\hline
Oracle cell (bound) & 1.00 & 1.00 & 1.00 & 1.00 \\
Oracle top-three (bound) & 1.00 & 15.61 & 25.08 & 0.62 \\
\hline
Correlation 1-NN & 3.61 & 23.38 & 30.02 & \textcolor{bestred}{\textbf{0.53}} \\
Correlation 3-NN & 3.21 & 23.91 & 28.15 & 0.41 \\
Weighted $k$-NN & \textcolor{bestred}{\textbf{3.02}} & 21.92 & 27.54 & 0.40 \\
Bayesian MMSE & 3.14 & 22.57 & 23.41 & 0.47 \\
CNN regression & 7.23 & 18.95 & 26.04 & 0.12 \\
CNN classification & 4.12 & 19.33 & 31.80 & 0.46 \\
Strong CNN & 3.61 & 23.79 & 27.18 & 0.43 \\
PALM physics only & 4.50 & 21.74 & 23.75 & 0.40 \\
PALM with refine & 4.72 & 21.72 & 23.72 & 0.42 \\
PALM with dense & 3.12 & \underline{11.84} & \underline{18.48} & \textcolor{bestred}{\textbf{0.53}} \\
\rowcolor{blue!8}
PALM (ours) & \underline{3.10} & \textcolor{bestred}{\textbf{11.77}} & \textcolor{bestred}{\textbf{18.44}} & \underline{0.48} \\
Original PALM & 3.86 & 14.92 & 19.81 & 0.44 \\
\hline
\rowcolor{gray!15}
Gain over 1-NN & \textcolor{improvecyan}{$-0.51$} & \textcolor{improvecyan}{$-11.61$} & \textcolor{improvecyan}{$-11.58$} & $-0.05$ \\
\hline
\end{tabular}}
\end{table}

\begin{figure}[!t]
\centering
\includegraphics[width=\linewidth]{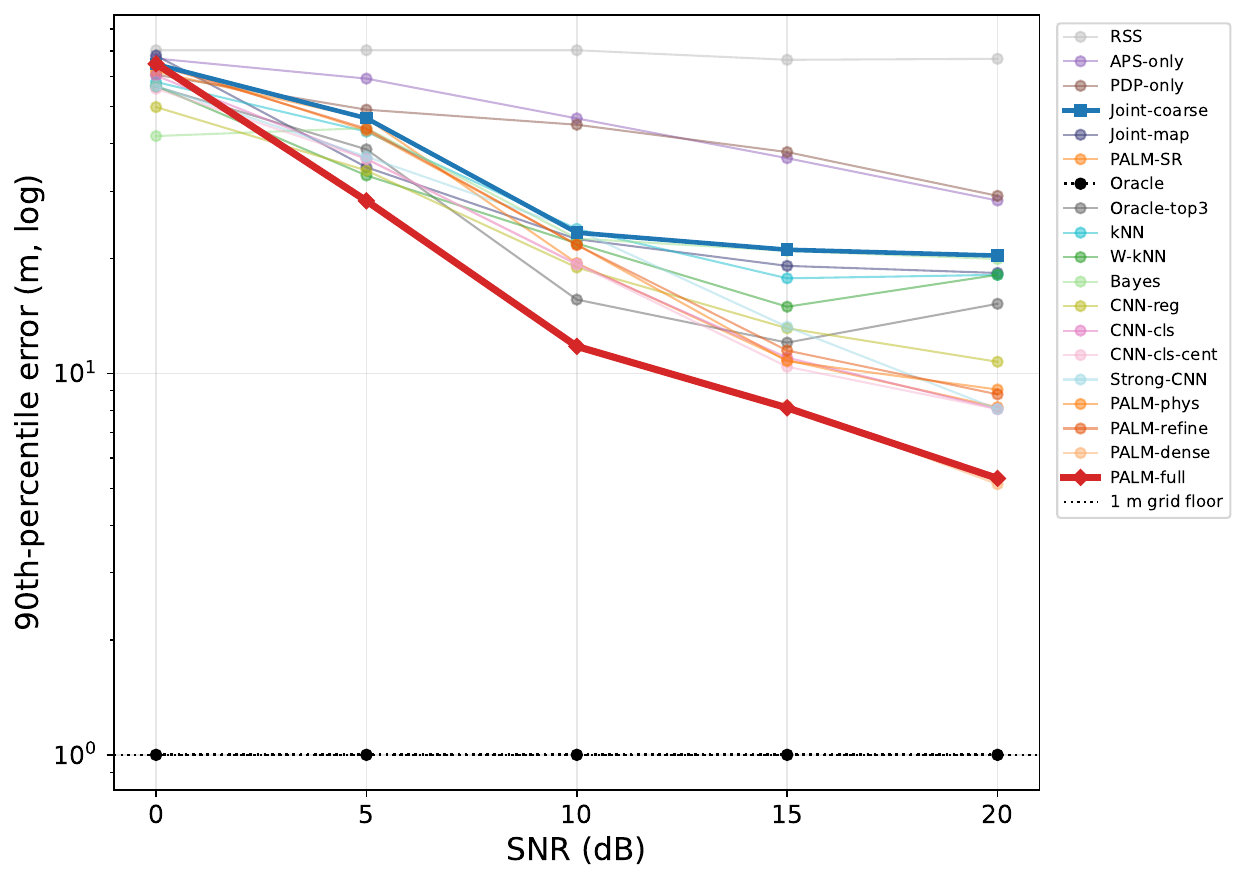}
\caption{Ninetieth-percentile position error against signal-to-noise ratio for
every baseline. PALM in the thick red curve has the lowest tail at every
signal-to-noise ratio from 5\,dB upward, while the grid-floor competitor stays
above it.}
\label{fig:p90}
\end{figure}

\begin{figure*}[!t]
\centering
\includegraphics[width=0.86\linewidth]{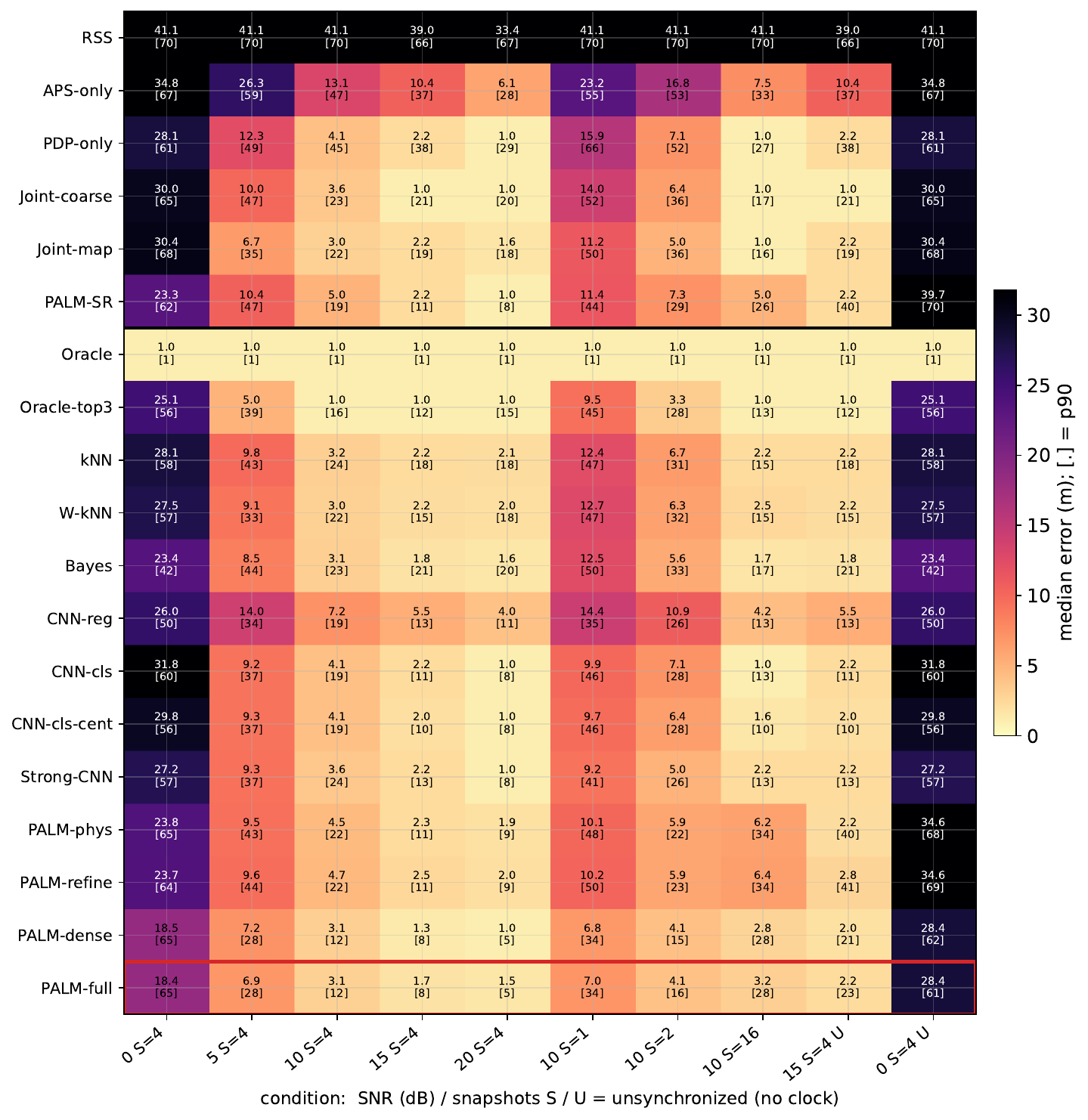}
\caption{Median position error of every baseline across every operating
condition, with the ninetieth percentile in brackets. The upper block is the
system-level information ladder and the lower block is the method-level estimator
ladder. The PALM row is boxed, and the colour scale runs from low error in light
to high error in dark.}
\label{fig:heatmap}
\end{figure*}

\subsection{Signal-to-Noise and Snapshot Sweeps}

PALM wins the snapshot-starved regime that fast acquisition cares about most.
With a single snapshot at 10\,dB the median is 7.0 meters against the 14.0 of the
correlation baseline, and with two snapshots it is 4.1 against 6.4 meters. Four
snapshots reach 3.1 against 3.6 meters, and only at sixteen snapshots does the
correlation baseline regain the lead at the grid floor. The signal-to-noise sweep
of Fig.~\ref{fig:p90} shows the same ordering, with the tail advantage widening as
the signal weakens. The gain is therefore largest exactly where a deployed
receiver is most constrained, since a short or weak observation is the common
case for an opportunistic position fix. The two sweeps agree that PALM converts a
weak or short observation into a reliable position, while a correlation baseline
needs a clean and long one.

\subsection{Component Attribution and Robustness}

Fig.~\ref{fig:waterfall} attributes the PALM accuracy to its components in the
five-decibel stress regime. The super-resolution and the matching add little to
the median over raw correlation, the soft centroid and the Gauss-Newton step
trade a small median change for a large tail reduction, and the bounded dense
fusion contributes the largest single drop, lowering the median by 2.72 meters
and the ninetieth percentile from 44 to 28 meters. The attribution confirms that
the dense fusion is the dominant learned contribution and that it matters most
where the signal is weak. The cascade therefore localizes the gain to the
components that matter under noise, and the largest gains arrive precisely where a
correlation baseline collapses. The same estimator is next stressed along the
three measurement axes at a clean operating point.

\begin{figure}[!t]
\centering
\includegraphics[width=\linewidth]{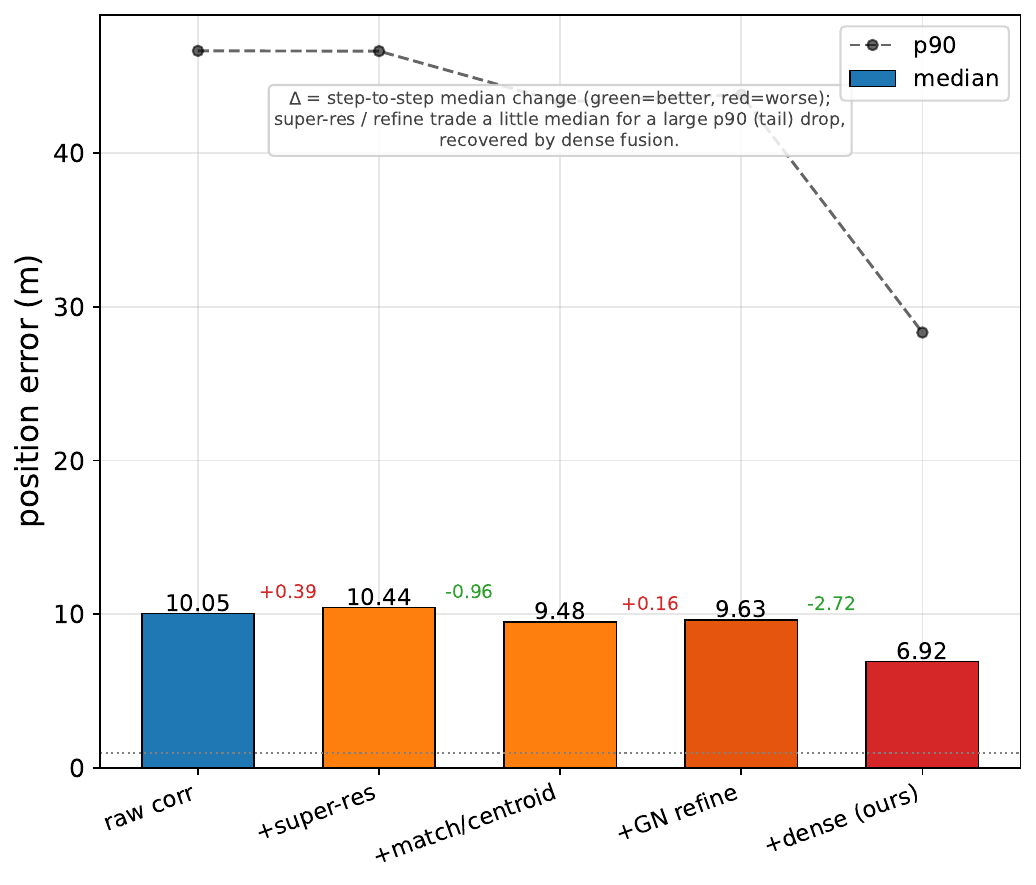}
\caption{Estimator cascade at 5\,dB. Bars show the median and the dashed line the
ninetieth percentile. Super-resolution and refinement trade a small median change
for a large tail reduction, and the bounded dense fusion contributes the largest
drop.}
\label{fig:waterfall}
\end{figure}

Table~\ref{tab:meas} isolates the three measurement-deviation axes at a clean
20\,dB operating point. Angle bias is the dominant vulnerability, since it has no
nuisance parameter to absorb it and both methods degrade past a moderate offset.
Delay bias is absorbed by the clock nuisance for the physics matcher, whose
median stays near 2.3 meters out to a 20 nanosecond offset. The dense expert is
delay sensitive, because a shifted joint map leaves its training distribution. Intensity error is the most benign axis, and PALM is the most
robust there, because the amplitude evidence and the learned scorer tolerate a
power-calibration error. The operating rule is therefore to lower the dense
weight under a known delay or large angle bias and to trust it under intensity
error, which the per-condition calibration already approximates. The three axes
thus carry different risk, and only the angle axis remains an open robustness
target.

\begin{table}[!t]
\centering
\caption{Measurement-deviation sensitivity at 20\,dB, synchronized mode, median position error (m). Each axis is injected in isolation. The better of original and PALM in each pair is in \textbf{bold}; the final row reports the PALM gain at the large deviation, with an improvement in \textcolor{improvecyan}{cyan}.}
\label{tab:meas}
\resizebox{\linewidth}{!}{
\begin{tabular}{@{}l|c|c|c|c|c|c@{}}
\hline
& \multicolumn{2}{c|}{Delay} & \multicolumn{2}{c|}{Angle} & \multicolumn{2}{c@{}}{Intensity} \\
Magnitude & orig & ours & orig & ours & orig & ours \\
\hline
None & 1.67 & \textbf{1.59} & 1.67 & \textbf{1.59} & 1.67 & \textbf{1.59} \\
Small & 1.46 & \textbf{1.35} & 2.18 & \textbf{1.85} & \textbf{1.63} & 1.66 \\
Moderate & \textbf{1.90} & 2.18 & 2.56 & \textbf{2.34} & 2.15 & \textbf{1.93} \\
Large & \textbf{2.30} & 3.63 & \textbf{5.10} & 6.52 & 4.82 & \textbf{3.48} \\
\hline
\rowcolor{gray!15}
Gain at large dev & \multicolumn{2}{c|}{$+1.33$} & \multicolumn{2}{c|}{$+1.42$} & \multicolumn{2}{c@{}}{\textcolor{improvecyan}{$-1.34$}} \\
\hline
\end{tabular}}
\end{table}

\subsection{Localization at Reduced Sounding Resolution}

The previous study coarsened the measurement; here we coarsen the sounding
hardware itself. PALM holds meter-level single-station accuracy as the native
resolution is lowered along each axis. We retrain a dedicated dense expert and
recalibrate the estimator at every configuration, since a coarser map is a
different training distribution. Fig.~\ref{fig:delay_res} sweeps the delay
resolution by lowering the bandwidth from 100 to 12.5 megahertz. The eightfold
loss widens the native delay cell from 10 to 80 nanoseconds. The PALM median
degrades gracefully from 1.6 to 3.0 meters across this range, while its
ninetieth percentile stays between 8.5 and 9.6 meters. The correlation baseline
keeps a ninetieth percentile of 16 to 20 meters throughout, so PALM roughly
halves the tail at every bandwidth. The tail advantage is therefore resolution
invariant, and meter-level single-station accuracy survives an eightfold
delay-resolution loss.

\begin{figure}[!t]
\centering
\includegraphics[width=\linewidth]{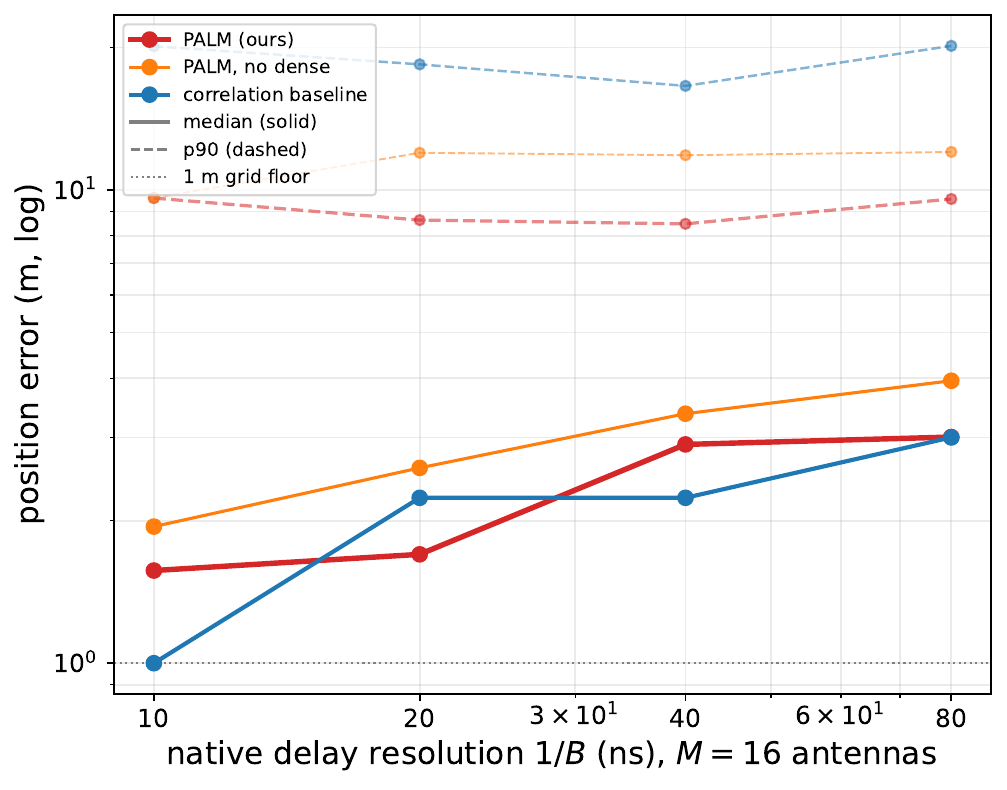}
\caption{Localization versus delay resolution at 20\,dB, single station. The
bandwidth falls from 100 to 12.5 megahertz, widening the native delay cell from
10 to 80 nanoseconds. Solid lines are the median and dashed lines the ninetieth
percentile. PALM (red) keeps a tail near 9 meters, roughly half the correlation
baseline (blue), at every bandwidth.}
\label{fig:delay_res}
\end{figure}

Fig.~\ref{fig:angle_res} coarsens the native angle resolution instead, by
shrinking the array from sixteen to four antennas at full bandwidth. The PALM
median degrades from 1.6 to 2.2 meters across this fourfold antenna loss. The
per-configuration calibrator keeps the dense fusion at the full array, yet drops
its weight to zero once the antennas are halved. A coarse beam map leaves the
dense expert too little angular detail to help, and the calibrator detects this
automatically. The correlation baseline again touches the grid floor whenever one
axis stays sharp, because it leans on the sharpest available dimension. Its
ninetieth percentile nonetheless remains 17 to 23 meters, so its apparent median
win is a brittle artifact of that single sharp axis. PALM instead trades a small
and predictable median rise for a tail that stays roughly twice as tight. These
sweeps confirm that the path geometry, not the raw sounding sharpness, carries
the position information that PALM exploits.

\begin{figure}[!t]
\centering
\includegraphics[width=\linewidth]{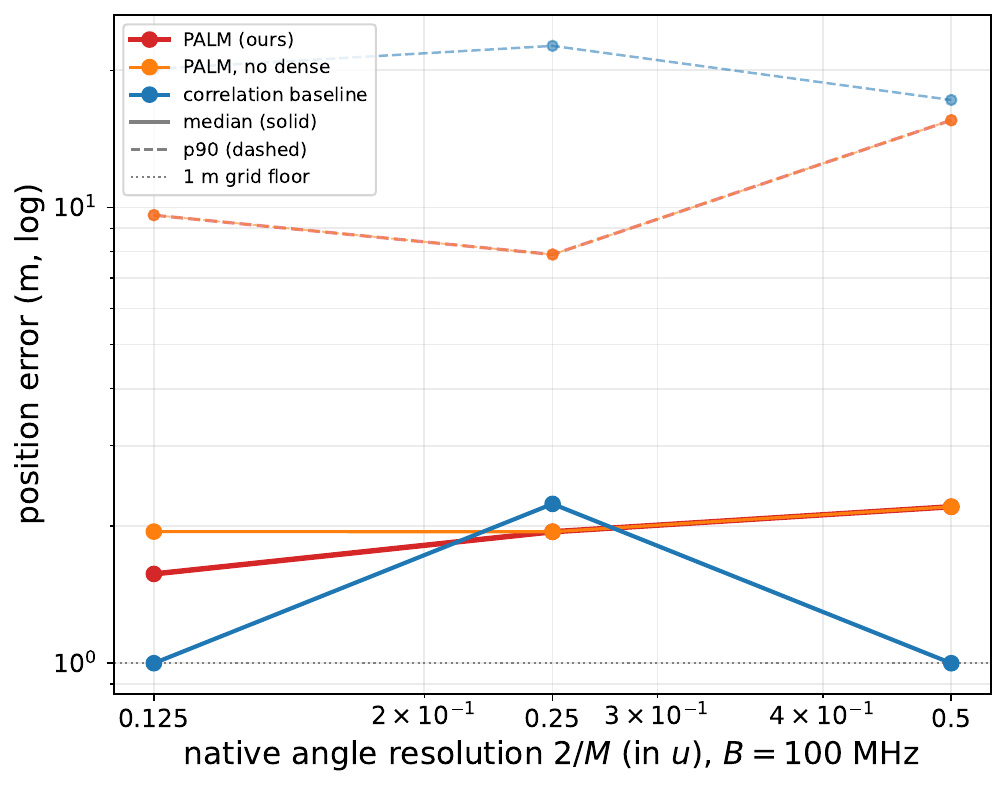}
\caption{Localization versus native angle resolution at 20\,dB, single station.
The array shrinks from sixteen to four antennas, widening the beam cell from
0.125 to 0.5 in the spatial frequency $u$. The PALM median (red) rises gently
while its tail stays roughly twice as tight as the correlation baseline (blue).}
\label{fig:angle_res}
\end{figure}

\subsection{Identification and Calibration}

The in-database identification test observes a database cell with fresh nominal
noise and classifies it among all 1808 cells. The exact-cell accuracy is near
0.30, and the residual error is interpolation across the one-meter survey grid
rather than gross misidentification. The oracle that retrieves the true cell
among its top three reaches one meter at 15\,dB, so the matcher already places
the truth in a small candidate set, and the remaining gap is selection that the
dense fusion narrows. The distributions of Fig.~\ref{fig:dist_meth} show PALM
dominating the body and the tail beyond the grid floor, while the correlation
baseline holds a narrow lead only below one meter at high signal-to-noise ratio.
The split-conformal radius is 10.4 meters at the ninety-percent level with an
empirical coverage of 0.95, which satisfies the distribution-free guarantee of
Proposition~\ref{prop:conformal}. The estimator therefore reports an honest
uncertainty that a downstream task can consume.

\subsection{Theory Validation}

The measured behaviour confirms the theory point by point. Replacing the
posterior-scaled surrogate with the exact marginal score of
Proposition~\ref{prop:marginal} restored true cells from ranks beyond one hundred
into the top twenty before any further change, which confirms the rank collapse
that Proposition~\ref{prop:jensen} predicts. The unit-gradient law of
Proposition~\ref{prop:lipschitz} holds for 85.7 percent of 46365 matched
power-consistent label pairs within three meters, with a median normalized slope
of 0.35, and the measured first-arrival gap of 42 nanoseconds between two cells
one meter apart confirms the relative-delay jump of
Proposition~\ref{prop:relative}. The local centroid of
Proposition~\ref{prop:centroid} improved the nominal median from about 2.5 to 2.1
meters, with the temperature flat between 0.8 and 3.0, consistent with an
optimality robust to mild miscalibration. The two-point ambiguity bound of
Theorem~\ref{thm:ambiguity} is nearly achieved within the Rayleigh family, with
an empirical ambiguity scale of 4.2 nanoseconds, and the split-conformal radius
of Proposition~\ref{prop:conformal} is 10.4 meters at the ninety-percent level
with empirical coverage 0.95.

\subsection{Negative Results}

We report the approaches that did not work, because they bound the problem.
Table~\ref{tab:negative} compares the deployed lightweight dense expert with a
higher-capacity convolutional network and two diffusion denoisers on the
cell-identification task. A conditional diffusion model collapses to a generic
clean map that correlates with every cell almost equally, and it reaches 35.4
meters of cell error against the 1.0 meter of the small expert. Diffusion
posterior sampling is far better than the conditional model, which confirms that
the unconditional-prior construction is the right diffusion design, yet it still
trails the lightweight expert and costs roughly two orders of magnitude more. The
higher-capacity network does not beat the small one either. The bottleneck is
information limited rather than capacity limited, since the clean RM signatures
are similar across cells and the oracle ceiling caps any model. The gains therefore
come from the physics and the bounded fusion rather than from heavier learning.
Generative diffusion remains the right tool for constructing a dense
map~\cite{wang2025radiodiff,wang2026radiodiff3d}, but the identification
bottleneck here rewards exact likelihood matching over map synthesis.

\begin{table}[!t]
\centering
\caption{Cell-identification quality of generative and high-capacity models, reported as cell-error median (m) and hit-within-1.5\,m rate. Lowest median per column in \textcolor{bestred}{\textbf{red}}, second best \underline{underlined}. The final row is the lightweight dense expert's median advantage over the strongest generative model.}
\label{tab:negative}
\resizebox{\linewidth}{!}{
\begin{tabular}{@{}l|c|c|c|c@{}}
\hline
Model & 15\,dB & 5\,dB & 0\,dB & 10\,dB,$S{=}1$ \\
\hline
Correlation baseline & \underline{2.24} / 0.49 & \textcolor{bestred}{\textbf{7.28}} / 0.28 & 30.0 / 0.08 & \underline{9.22} / 0.21 \\
\rowcolor{blue!8}
Dense expert (ours) & \textcolor{bestred}{\textbf{1.00 / 0.52}} & 8.06 / 0.17 & 30.4 / 0.02 & 10.1 / 0.12 \\
Strong CNN & 2.24 / 0.43 & \underline{7.55} / 0.10 & \textcolor{bestred}{\textbf{25.2}} / 0.03 & \textcolor{bestred}{\textbf{9.11}} / 0.16 \\
Conditional diffusion & 35.4 / 0.02 & 29.9 / 0.02 & 35.8 / 0.01 & 34.1 / 0.04 \\
Diffusion posterior sampling & 3.86 / 0.35 & 9.11 / 0.19 & \underline{34.3} / 0.02 & 11.2 / 0.17 \\
\hline
\rowcolor{gray!15}
Dense over best diffusion & \textcolor{improvecyan}{$-2.86$} & \textcolor{improvecyan}{$-1.05$} & \textcolor{improvecyan}{$-3.84$} & \textcolor{improvecyan}{$-1.13$} \\
\hline
\end{tabular}}
\end{table}

\subsection{Discussion and Limitations}

PALM wins the tail at every signal-to-noise ratio from 5\,dB upward and the
low-signal-to-noise and single-snapshot stress regimes, while the correlation
baseline keeps a high-signal-to-noise median edge from the one-meter grid floor.
Two boundaries are stated plainly. The physics matcher has a snapshot sweet spot
at four snapshots, and its median degrades from 4.5 to 6.2 meters at sixteen
snapshots. There a clean query exposes the mismatch between its sharp atoms and
the merged database, and a blurred correlator wins outright. The cause is the
matcher rather than the dense expert, which still helps at sixteen snapshots, and
a remedy would inflate the atom tolerance with measurement precision. PALM also
scores every database cell at a cost near 335 milliseconds per query against the
sub-millisecond cost of correlation, although the scan is trivially parallel over
cells. Zero decibels remains unusable for every method on this scene. These
boundaries are scope statements that a richer observation or a tolerance schedule
would move.

\section{Conclusion}
\label{sec:conclusion}

We have proposed PALM, a path-atom localization method that super-resolves a
coarse angle-delay observation into scored atoms and matches them to a ray-traced
RM by an exact marginal likelihood. We have shown that the marginal score
is exact while its posterior-scaled surrogate is a rank-inverting Jensen bound,
and that an absolute-axis match, a bounded dense fusion, and a restricted
centroid reduce the error tail by 34 to 62 percent over received-power RM
matching on a real scene. The framework offers environment-aware networks a
single-station localizer that turns ray-traced path geometry into meter-level
position without the multiple synchronized anchors that geometry requires. Future work will replace the
fixed-tolerance matcher with a precision-adaptive one to close the high-snapshot
gap and will extend the calibration across multiple scenes.

\bibliographystyle{IEEEtran}
\bibliography{refs_lens}

\end{document}